\def\ind{\mathbb{I}}
\def\P{\mathbb{P}}
\def\E{\mathbb{E}}
\def\eqdef{\overset{\text{def}}{=}}
\newtheorem{theorem}{Theorem}
\newtheorem{lemma}{Lemma}
\newtheorem{remark}{Remark}
\newtheorem{proposition}{Proposition}
\icmltitlerunning{Adaptive Monte Carlo Multiple Testing}
\begin{document}

\twocolumn[
\icmltitle{Adaptive Monte Carlo Multiple Testing via Multi-Armed Bandits}



\icmlsetsymbol{equal}{*}

\begin{icmlauthorlist}
\icmlauthor{Martin J. Zhang}{ee}
\icmlauthor{James Zou}{ee,bmi,biohub}
\icmlauthor{David Tse}{ee}
\end{icmlauthorlist}

\icmlaffiliation{ee}{Department of Electrical Engineering, Stanford University}
\icmlaffiliation{bmi}{Department of Biomedical Data Science, Stanford University}
\icmlaffiliation{biohub}{Chan-Zuckerberg Biohub}

\icmlcorrespondingauthor{James Zou}{jamesyzou@gmail.com}
\icmlcorrespondingauthor{David Tse}{dntse@stanford.edu}

\icmlkeywords{Multiple hypothesis testing, Monte Carlo, permutation test, multi-armed bandits, GWAS}

\vskip 0.3in
]



\printAffiliationsAndNotice{}  

\begin{abstract}
Monte Carlo (MC) permutation test is considered the gold standard for statistical hypothesis testing, especially when standard parametric assumptions are not clear or likely to fail. However, in modern data science settings where a large number of hypothesis tests need to be performed simultaneously, it is rarely used due to its prohibitive computational cost. In genome-wide association studies, for example, the number of hypothesis tests $m$ is around $10^6$ while the number of MC samples $n$ for each test could be greater than $10^8$, totaling more than $nm$=$10^{14}$ samples. In this paper, we propose  \texttt{A}daptive \texttt{M}C multiple \texttt{T}esting (\texttt{AMT}) to estimate MC p-values and control false discovery rate in multiple testing. The algorithm outputs the same result as the standard full MC approach with high probability while requiring only $\tilde{O}(\sqrt{n}m)$ samples. This sample complexity is shown to be optimal. On a Parkinson GWAS dataset, the algorithm reduces the running time from 2 months for full MC to an hour. The \texttt{AMT} algorithm is derived based on the theory of multi-armed bandits.
\end{abstract}

\section{Introduction}
Monte Carlo (MC) permutation testing is considered the gold standard for statistical hypothesis testing. It has the broad advantage of estimating significance non-parametrically, thereby safeguarding against inflated false positives \cite{dwass1957modified,davison1997bootstrap,boos2000monte,lehmann2006testing,phipson2010permutation}. It is especially useful in cases where the distributional assumption of the data is not apparent or likely to be violated. 

A good example is genome-wide association study (GWAS), whose goal is to identify associations between the genotypes (single nucleotide polymorphisms or SNPs) and the phenotypes (traits) \cite{visscher201710}. For testing the association between a SNP and the phenotype, the p-value is often derived via closed-form methods like the analysis of variance (ANOVA) or the Pearson's Chi-squared test \cite{purcell2007plink}. However, these methods rely on certain assumptions on the null distribution, the violation of which can lead to a large number of false positives \cite{yang2014pboost,che2014adaptive}. MC permutation test does not require distributional assumption and is preferable in such cases from a statistical consideration \cite{gao2010avoiding}. However, the main challenge of applying MC permutation test to GWAS is {\em computational}.

MC permutation test is a special type of MC test where the p-values are estimated by MC sampling from the null distribution --- permutation test computes such MC samples by evaluating the test statistic on the data points but with the responses (labels) randomly permuted. Let $T^{\text{obs}}$ be the observed test statistic and $T^{\text{null}}_{1}, T^{\text{null}}_{2}, \cdots, T^{\text{null}}_{n}$ be $n$ independently and identically distributed (i.i.d.) test statistics randomly generated under the null hypothesis. The MC p-value is written as 
\begin{align}\label{eq:fMC_p_val}
    P^{\text{MC}(n)} \eqdef \frac{1}{n+1} \left( 1 + \sum_{j=1}^n \ind {\{T^{\text{null}}_{j} \geq T^{\text{obs}} \}}\right),
\end{align}
which conservatively estimates the ideal p-value $P^{\infty} \eqdef \P(T^{\text{null}} \geq T^{\text{obs}})$. In addition, $P^{\text{MC}(n)}$ converges to the ideal p-value $P^{\infty}$ as the number of MC samples $n \rightarrow \infty$.


GWAS is an example of large-scale multiple testing: each SNP is tested for association with the phenotype, and there are many SNPs to test. For performing $m$ such tests simultaneously, the data is collected and each of the $m$ null hypotheses is associated with an ideal p-value (Fig.\ref{fig:schema}a). A common practice, as visualized in Fig.\ref{fig:schema}b, is to first compute an MC p-value for each test using $n$ MC samples and then apply a multiple testing procedure to the set of MC p-values  $\{P_i^{\text{MC}(n)}\}$ to control the false positives, e.g., using the Bonferroni procedure \cite{dunn1961multiple} or the Benjamini-Hochberg procedure (BH) \cite{benjamini1995controlling}. Here, as folklore, the number of MC samples $n$ is usually chosen to be at least 10 or 100 times\footnote{For a hypothesis with ideal p-value $P^{\infty}$, the relative error for the MC p-value with $n$ MC samples is $1 / \sqrt{n P^{\infty}}$ and therefore, choosing e.g. $n=100/P^{\infty}$ gives a relative error of 0.1. Since in multiple testing the p-values we are interested in can be as small as $1/m$, it is recommended to set $n = 100 m$.} of the number of tests $m$. In GWAS, there are around $10^6$ SNPs to be examined simultaneously via multiple testing and $n$ is recommended to be at least $10^8$ \cite{johnson2010accounting}. The total number of MC samples is $nm$=$10^{14}$, infeasible to compute.

%


\begin{figure}
    \centering
    \centerline{\includegraphics[width=0.48\textwidth]{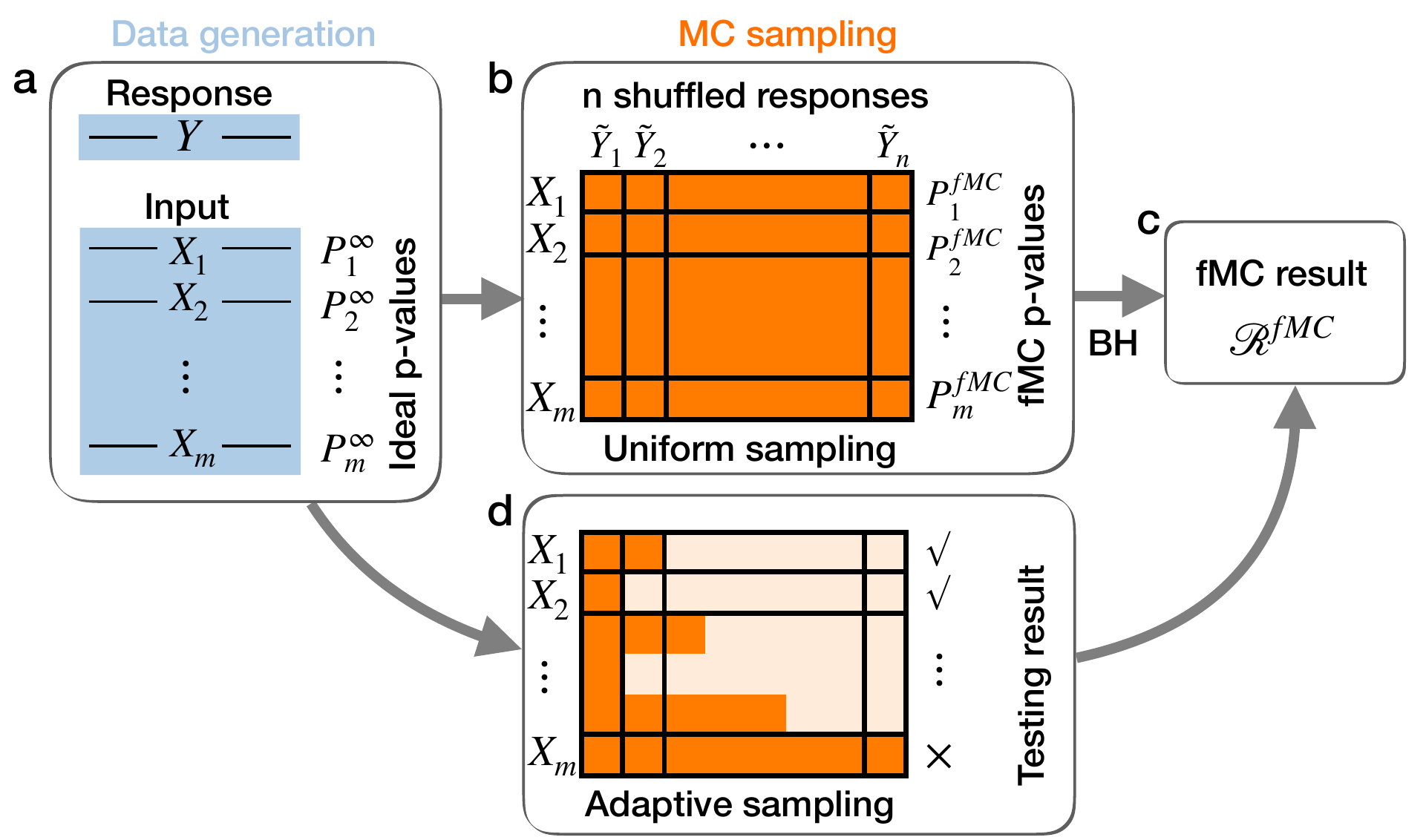}}
    \caption{Workflow. 
    (a) $N$ data samples are collected and the $k$th data point has a response (label) $Y^{(k)}$ and $m$ inputs (features) $(X_1^{(k)}, \cdots, X_m^{(k)})$. 
    There are $m$ null hypotheses to test; the $i$th null hypothesis corresponds to no association between the $i$th input $\mathbf{X}_i$=$(X_i^{(1)}, \cdots, X_i^{(N)})$ and the response $\mathbf{Y}$=$(Y^{(1)}, \cdots, Y^{(N)})$.
    (b) The standard fMC workflow is to: 1) compute an MC p-value for each test $i$ using $n$ MC samples $T^{\text{null}}_{i1}, \cdots, T^{\text{null}}_{in}$; 2) apply the BH procedure on the set of MC p-values to control FDR. In this example, $T^{\text{null}}_{ij}$ is the correlation between the $i$th input $\mathbf{X}_i$ and a randomly permuted response $\mathbf{Y}_{\sigma_j}$. 
    (c) The fMC result is to make discovery (claim association) for a subset of inputs.
    (d) \texttt{AMT} directly estimates the fMC p-values by adaptive MC sampling and recovers the fMC testing result with high probability.}
    \label{fig:schema}
\end{figure}

This work considers the standard full MC (fMC) workflow, as shown in Fig.\ref{fig:schema}b, of first computing p-values with MC sampling and then controlling the false discovery rate (FDR) by applying the BH procedure to the set of MC p-values. The aim is to reduce the number of MC samples while obtaining the same fMC testing result. The focus of the present paper is on solving a \emph{computational} problem, i.e., accelerating the standard fMC workflow, rather than a \emph{statistical} problem, e.g., improving the power of the test. An alternative goal may be to recover the BH discoveries on the ideal p-values $\{P_i^{\infty}\}$, which is an ill-posed problem that may take unrealistically many MC samples. Recovering the fMC result, however, takes at most $nm$ samples and any improvement over the complexity $nm$ of uniform sampling represents an improvement over the standard workflow. 

{\bf Contribution.} We propose \texttt{A}daptive \texttt{M}C multiple \texttt{T}esting (\texttt{AMT}) to compute the fMC testing result via adaptive MC sampling. While recovering the fMC result with high probability, it effectively improves the sample complexity from $nm$ to $\tilde{O}(\sqrt{n}m)$ under mild assumptions that encompass virtually all practical cases, where $\tilde{O}$ hides logarithmic factors. A matching lower bound is provided. In a GWAS dataset on the Parkinson's disease, it improves the computational efficiency by 2-3 orders of magnitude, reducing the running time from 2 months to an hour. We note that $\texttt{AMT}$ is not specific to MC permutation test; it can be used for MC tests in general.

The fMC procedure computes $n$ MC samples for each of the $m$ null hypotheses. For each null hypothesis, a randomly selected subset of the MC samples can provide an estimate of its fMC p-value, whereas the size of this subset determines the estimation accuracy. Intuitively, to recover the fMC result, we only need to estimate how each fMC p-value compares with the corresponding BH threshold; hypotheses with p-values far away from the threshold can be estimated less accurately, thus requiring fewer MC samples. \texttt{AMT} turns this pure computational fMC procedure into a statistical estimation problem, where adaptive sampling can be used.

The specific adaptive sampling procedure is developed via a connection to the pure exploration problem in multi-armed bandits (MAB) \cite{audibert2010best,jamieson2014lil}. Specifically, the top-$k$ identification problem \cite{kalyanakrishnan2012pac,chen2017nearly,simchowitz2017simulator} aims to identify the best $k$ arms via adaptive sampling. For \texttt{AMT}, we can think of the $m$ null hypotheses as arms, fMC p-values as arm parameters, and MC samples as observations for each arm. Then recovering the fMC result corresponds to identifying a subset of best arms with small p-values. The difference is that the size of this subset is not known ahead of time --- it is a function of the fMC p-values that needs to be learned from data. Nonetheless, the techniques in MAB is borrowed to develop \texttt{AMT}.

\subsection{Background}
{\bf Permutation test.} Consider testing the association between input $X$ and response $Y$ using $N$ data samples, i.e., the input vector $\mathbf{X} \in \mathbb{R}^N$ and the response vector $\mathbf{Y} \in \mathbb{R}^N$. A reasonable test statistic can be the Pearson's correlation $\rho(\mathbf{X}, \mathbf{Y})$. Let $\sigma$ be a permutation on $\{1, \ldots, N\}$ and $\mathcal{S}$ be the set of all possible permutations. The permutation test statistic by permuting the response with $\sigma$ can be written as $\rho(\mathbf{X}, \mathbf{Y_\sigma})$. Under the null hypothesis that the response $\mathbf{Y}$ is exchangeable among $N$ samples, the rank of the observed test statistic $\rho(\mathbf{X}, \mathbf{Y})$ among all permutation test statistics is uniformly distributed. Hence, the permutation p-value $p^{\text{Perm}} \eqdef \frac{1}{\vert \mathcal{S} \vert} \sum_{\sigma \in \mathcal{S}} \ind\{\rho(\mathbf{X}, \mathbf{Y_\sigma}) \geq \rho(\mathbf{X}, \mathbf{Y})\}$
follows a uniform distribution over the support $\{\frac{1}{\vert \mathcal{S} \vert}, \frac{2}{\vert \mathcal{S} \vert}, \cdots, 1\}$. In most cases, the sample size $N$ is too large for computing all possible permutations; MC permutation test is used where the permutations are uniformly sampled from $\mathcal{S}$.

{\bf FDR control.} For simultaneously testing $m$ null hypotheses with p-values $P_1, \cdots, P_m$, a common goal is to control FDR, defined as the expected proportion of false discoveries 
\begin{align}
    \text{FDR} \eqdef \E \left[ \frac{\text{Number of false discoveries}}{\text{Number of discoveries}}\right].
\end{align}
The most widely-used FDR control algorithm is the BH procedure \cite{benjamini1995controlling}. Let $P_{(i)}$ be the $i$th smallest p-value. The BH procedure rejects hypotheses $P_{(1)}, \cdots, P_{(r^*)}$, where $r^*$ is the critical rank defined as $r^* \eqdef \max \left\{r: P_{(r)} \leq \frac{r}{m} \alpha, r \in \{1,2,\cdots, m\} \right\}$.
The BH procedure controls FDR under the assumption that the null p-values are independent and stochastically greater than the uniform distribution.

\subsection{Related works}
The idea of algorithm acceleration by converting a computational problem into a statistical estimation problem and designing the adaptive sampling procedure via MAB has witnessed a few successes. An early example of such works is the Monte Carlo tree search method \citep{OR_folks,KocSze} to solve large-scale Markov decision problems, a central component of modern game playing systems like AlphaZero \cite{alpha0}. More recent examples include adaptive hyper-parameter tuning for deep neural networks \cite{JamTal,LiJamEtAl} and medoid computation \cite{bagaria2018medoids}. The latter work gives a clear illustration of the power of such an approach. The medoid of a set of $n$ points is the point in the set with the smallest average distance to other points. The work shows that by adaptively \emph{estimating} instead of exactly \emph{computing} the average distance for each point, the computational complexity can be improved from $n^2$ of the naive method to almost linear in $n$. This idea is further generalized in \texttt{AMO} \cite{bagaria2018adaptive} that considers optimizing an arbitrary objective function over a finite set of inputs. In all these works, the adaptive sampling is by standard best-arm identification algorithms. 
This present work also accelerates the fMC procedure by turning it into a statistical estimation problem.
However, no MAB algorithm is readily available for this particular problem. 

Our work applies MAB to FDR control by building an efficient {\em computational} tool to run the BH procedure {\em given the data}.
There are recent works that also apply MAB to FDR control but in a {\em statistical inference} setting where the {\em data collection} process itself can be made adaptive over the different tests. In these works, each arm also corresponds to a test, but each arm parameter takes on a value that corresponds to either null or alternative. Fresh data can be adaptively sampled for each arm and the goal is to select a subset of arms while controlling FDR \cite{yang2017framework,jamieson2018bandit}. In such settings, each observation is a new data and the p-values for the alternative hypotheses can be driven to zero. This is different from \texttt{AMT} where the arm observations are MC samples simulated from the data. As a result, the fMC p-values themselves are the arm parameters and the goal is to {\em compute} them efficiently to perform BH. In an application like GWAS, where all the SNPs data are typically collected simultaneously via whole genome sequencing, adaptive data collection does not apply but overcoming the computational bottleneck of the full MC procedure is an important problem addressed by the present work. See more details of bandit FDR in Supp. Sec. \ref{subsec:bandit_fdr}.

In the broader statistical literature, adaptive procedures \cite{besag1991sequential,gandy2017implementing} or importance sampling methods \cite{yu2011efficient,shi2016efficiently} were developed to efficiently compute a single MC p-value. For testing multiple hypotheses with MC tests, interesting heuristic adaptive algorithms were proposed without formal FDR guarantee \cite{sandve2011sequential,gandy2017quickmmctest}; the latter \cite{gandy2017quickmmctest} was developed via modifying Thompson sampling, another MAB algorithm. Asymptotic results were provided that the output of the adaptive algorithms will converge to the desired set of discoveries \cite{guo2008adaptive,gandy2014mmctest,gandy2016framework}. Specifically, the most recent work \cite{gandy2016framework} provided a general result that incorporates virtually all popular multiple testing procedures. However, none of the above works provide a standard FDR control guarantee (e.g., $\text{FDR}\leq \alpha$) nor an analysis of the MC sample complexity; the MC sample complexity was analyzed in another work only for the case of using Bonferroni procedure \cite{hahn2015optimal}. In the present work, standard FDR control guarantee is provided along with upper and lower bounds on the MC sample complexity, establishing the optimality of \texttt{AMT}.

There are also works on fast MC test for GWAS or eQTL (expression quantitative trait loci) study \cite{pahl2010permory,kimmel2006fast,browning2008presto,jiang2012statistical,zhang2012rapid}; they consider a different goal which is to accelerate the process of separately computing each MC p-value. In contrast, \texttt{AMT} accelerates the entire workflow of both computing MC p-values and applying BH on them, where the decision for each hypothesis also depends globally on others. The state-of-art method is the sequential Monte Carlo procedure (sMC) that is implemented in the popular GWAS package \texttt{PLINK} \cite{besag1991sequential,purcell2007plink,che2014adaptive}. For each hypothesis, it keeps MC sampling until having observed $s$ extreme events or hit the sampling cap $n$. Then BH is applied on the set of sMC p-values. Here we note that the sMC p-values are conservative so this procedure controls FDR. sMC is discussed and thoroughly compared against in the rest of the paper.


\begin{figure*}
    \centerline{
    \subfigure[Initialization]{\includegraphics[width=0.32\textwidth]{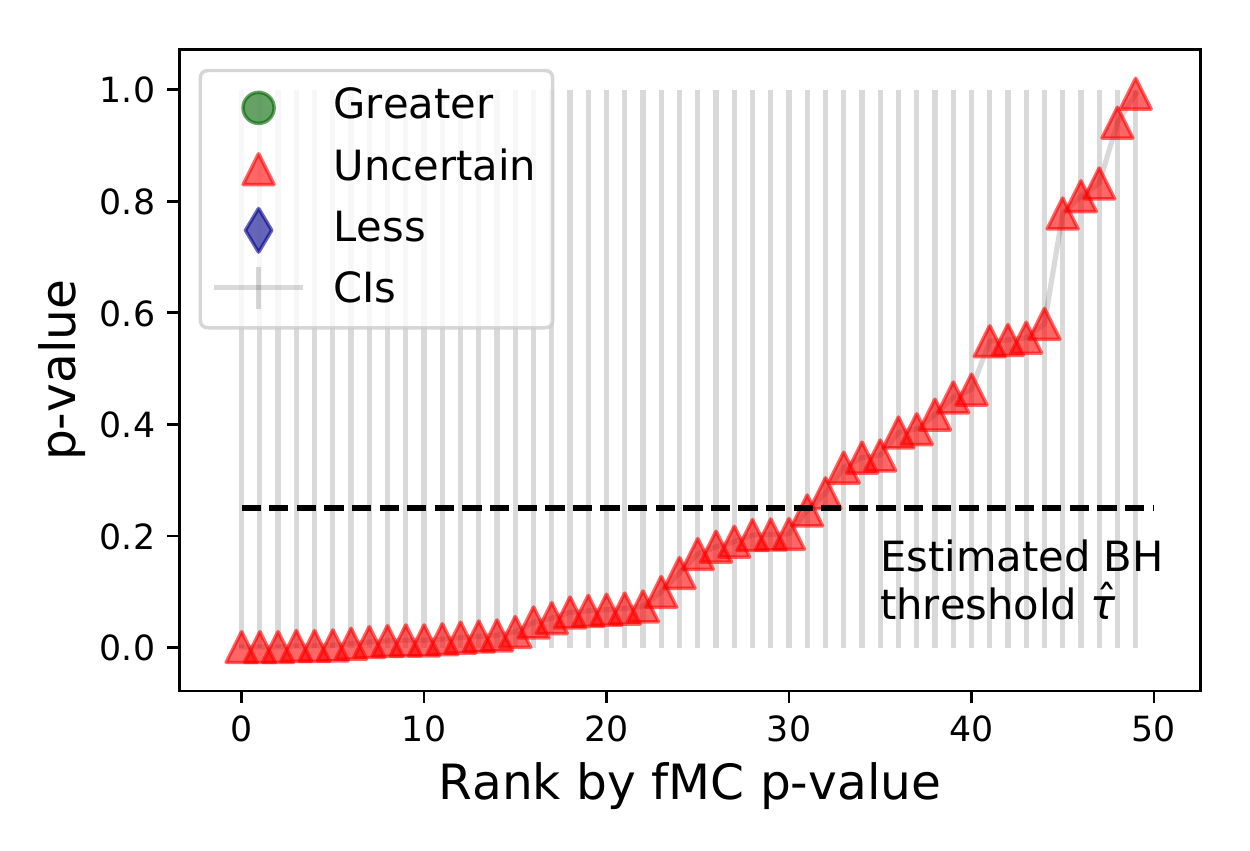}}
    \subfigure[In progress]{\includegraphics[width=0.32\textwidth]{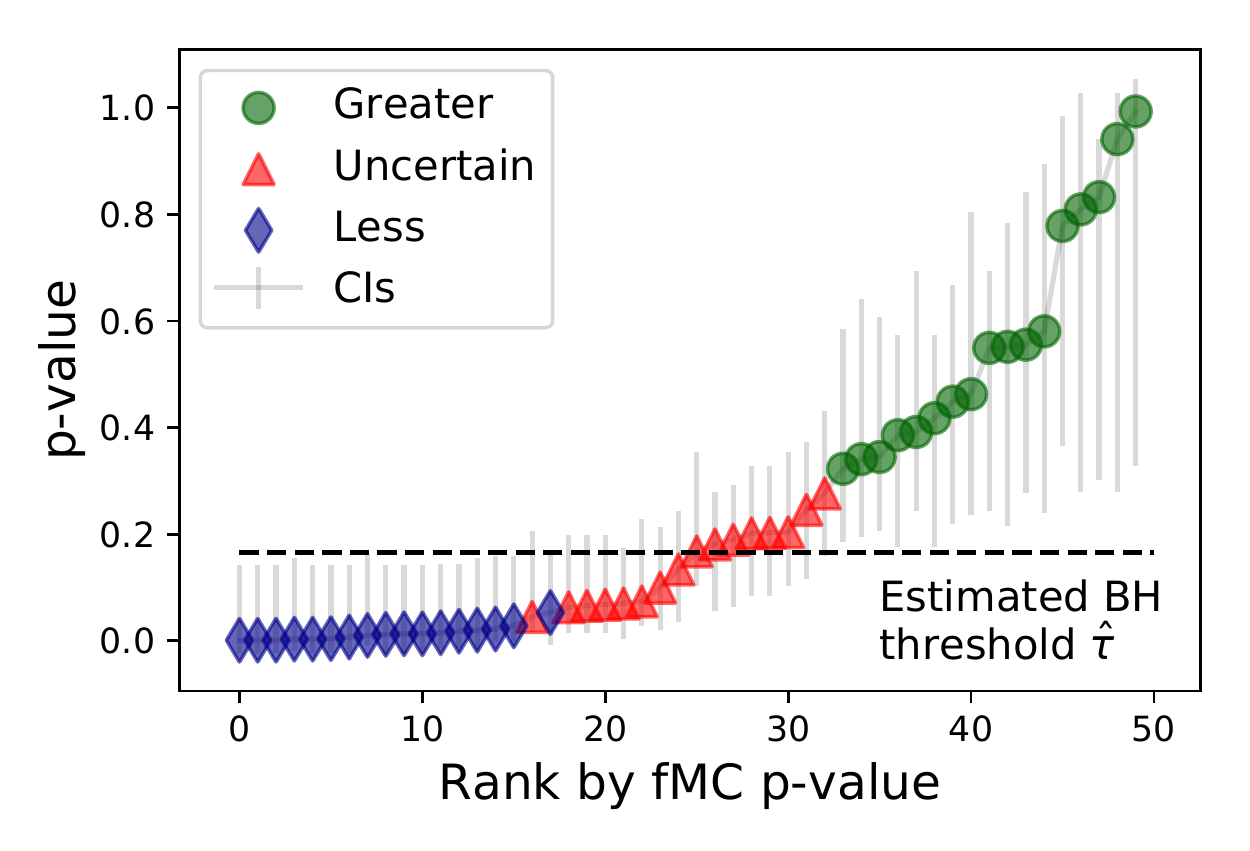}}
    \subfigure[Termination]{\includegraphics[width=0.32\textwidth]{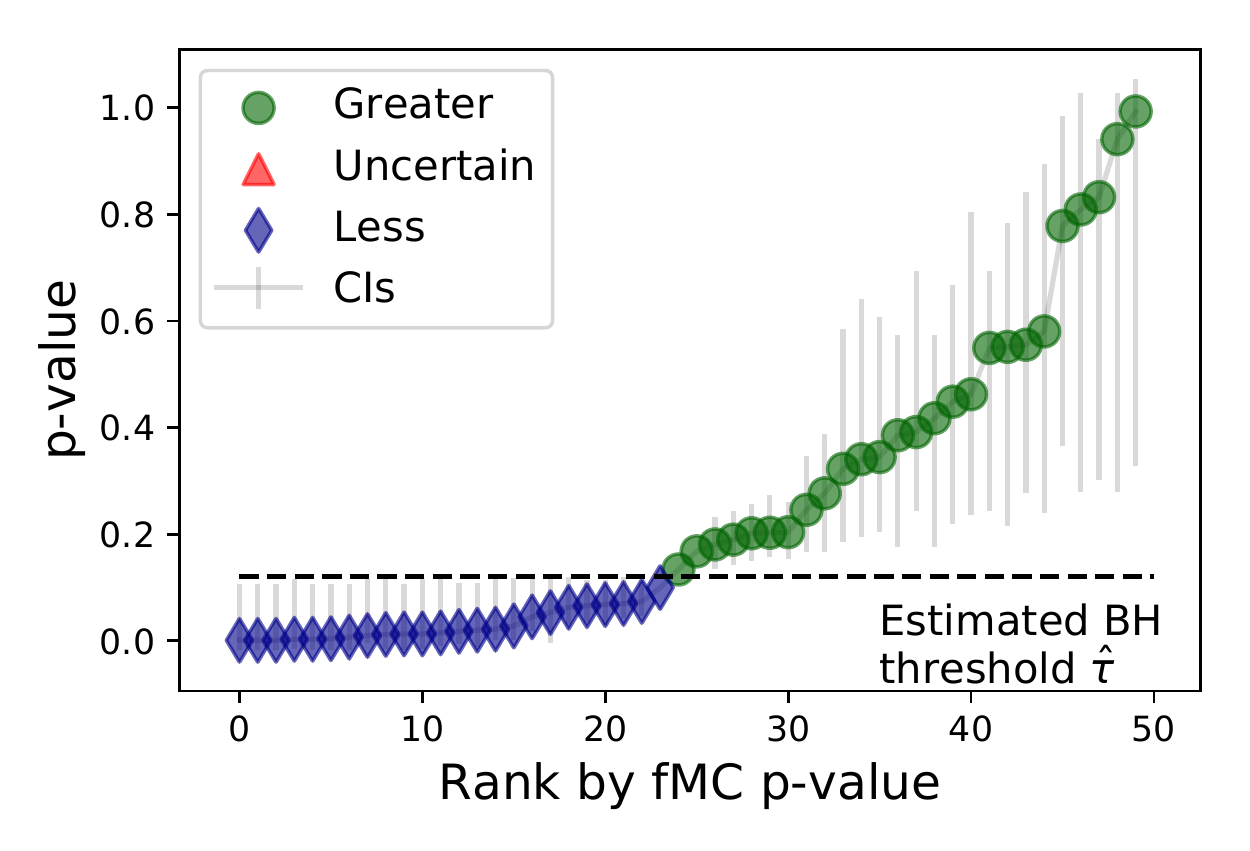}}}
    \caption{Progression of \texttt{AMT}. In this toy example, n=1000, m=50, and $\alpha$=0.25. \texttt{AMT} maintains upper and lower CBs for each hypothesis (vertical grey bar). (a) At initialization, the estimated BH threshold is set to be maximum $\hat{\tau}=\alpha$ while all CBs cross $\hat{\tau}$. Thus, all hypotheses are in $\mathcal{U}$ and need to be further sampled (red triangle).  (b) As the algorithm progresses, more MC samples narrow the confidence intervals and some hypotheses become certain to be greater (green circle) or less (blue diamond) than the estimated BH threshold. The estimated BH threshold also moves down accordingly. (c) At termination, there is no uncertain hypothesis.}
    \label{fig:progression}
\end{figure*}

\section{Problem Formulation}
Let $m$ be the number of hypotheses and $P_1^{\infty}, \cdots, P_m^{\infty}$ be the ideal p-values. We use the standard notation $[m]\eqdef \{1,2,\cdots,m\}$. For two numbers $a,b \in \mathbb{R}$, $a\wedge b$ means $\min(a,b)$ and $a\vee b$ means $\max(a,b)$.

For each hypothesis $i \in [m]$, we assume the MC samples are available of the form  
\begin{align}\label{eq:MC_sample}
    \left[ B_{i,1}, B_{i,2}, \cdots, B_{i,n} \Big\vert P_i^{\infty}=p^{\infty}_i\right] \overset{\text{i.i.d.}}{\sim} \text{Bern} (p^{\infty}_i).
\end{align}
Note that one can think of $B_{i,j} = \ind \{T^{\text{null}}_{i,j} \geq T^{\text{obs}}_i\}$. 

To contrast with adaptive MC sampling, we change the superscript from ``MC($n$)'' to ``fMC'' for the fMC p-values. 
Specifically, the fMC procedure uniformly computes $n$ MC samples for each hypothesis, yielding fMC p-values 
\begin{align}\label{eq:MC_p_val}
    P_i^{\text{fMC}} \eqdef \frac{1}{n+1} \left( 1 + \sum_{j=1}^n B_{i,j}\right),~~~~i\in [m].
\end{align}
Here, the extra ``1'' in the brackets is to make the fMC p-value conservative under the null. 
We would like to point out that there are two sources of randomness. The first is from the data generation process corresponding to the ideal p-values $\{P_i^{\infty}\}$ while the second is from MC sampling; they correspond to panel a and panels b-c in Fig.\ref{fig:schema}, respectively. The second source of randomness corresponding to MC sampling is of primary interest in the present paper.

Applying the BH procedure to the fMC p-values yields a set of discoveries $\mathcal{R}^{\text{fMC}}\subset [m]$. 
Since the fMC p-values are stochastically greater than the uniform distribution under the null hypothesis \cite{phipson2010permutation}, the set of fMC discoveries $\mathcal{R}^{\text{fMC}}$ has a FDR controlled below the nominal level $\alpha$. 
Here, let $P_{(r)}^{\text{fMC}}$ represent the $r$th smallest p-value and define the critical rank as 
\begin{align}\label{eq:fMC_critical_rank}
    r^* \eqdef \max \left\{r: P_{(r)}^{\text{fMC}} \leq \frac{r}{m} \alpha, r \in [m] \right\}.
\end{align}
The BH threshold can be written as $\tau^* \eqdef \frac{r^*}{m}\alpha$ while the set of fMC discoveries $\mathcal{R}^{\text{fMC}} \eqdef \{i: P_i^{\text{fMC}}\leq \tau^*\}$.
The goal is to compute the fMC discoveries $\mathcal{R}^{\text{fMC}}$ with high probability while requiring minimum number of MC samples. 
Formally, we aim to minimize the number of MC samples for the algorithm such that the algorithm output $\mathcal{R} \subset [m]$ satisfies $\P(\mathcal{R} = \mathcal{R}^{\mathrm{fMC}}) \geq 1-\delta$,
for some given $\delta>0$.


\section{Algorithm}

\begin{algorithm}[tb]
   \caption{The \texttt{AMT} algorithm.}
   \label{alg:adapermute}
\begin{algorithmic}
   \STATE {\bfseries Input:} failure probability $\delta$, nominal FDR $\alpha$.
   \STATE {\bfseries Initialization:} $\frac{\delta}{2 mL}$-CBs $\{p^{\text{lb}}_i=0, p^{\text{ub}}_i=1\}_{i\in [m]}$, critical rank estimate $\hat{r}=m$, BH threshold estimate $\hat{\tau}=\alpha$, hypothesis sets $\mathcal{C}_{\text{g}} = \emptyset, \mathcal{C}_{\text{l}} = \emptyset, \mathcal{U} = [m]$. 
   \REPEAT
   \STATE {\bfseries Sample} obtain the next batch of MC samples for each hypothesis in $\mathcal{U}$ and update their $\frac{\delta}{2 mL}$-CBs (Sec. \ref{SubSec:CI}).
   \STATE {\bfseries Update} reduce $\hat{r}$ one at a time and update $\mathcal{C}_{\text{g}}$ correspondingly until the following hold at the same time: 
   \begin{align*}
       \mathcal{C}_{\text{g}} = \left\{i: p^{\text{lb}}_i > \frac{\hat{r}}{m}\alpha \right\},~~~~ \hat{r} = m - \vert \mathcal{C}_{\text{g}}\vert.
   \end{align*}
    Update the estimated BH threshold $\hat{\tau} = \frac{\hat{r}}{m}\alpha$ and the hypothesis sets
   \begin{align*}
       \mathcal{U} = \{i: p^{\text{lb}}_i \leq \hat{\tau} < p^{\text{ub}}_i\},~~~~ \mathcal{C}_{\text{l}} = \{i: p^{\text{ub}}_i \leq \hat{\tau}\}.
   \end{align*}
   \UNTIL{$\mathcal{U} \neq \emptyset$}
   \STATE {\bfseries Return:} $\mathcal{R} = \mathcal{C}_{\text{l}}$.
\end{algorithmic}
\end{algorithm}

\texttt{AMT} is described as in Algorithm \ref{alg:adapermute}.
It adopts a top-down procedure by starting with an initial critical rank estimate $\hat{r}=m$ and gradually moving down until it reaches the true critical rank $r^*$.
Specifically, it maintains upper and lower confidence bounds (CBs) for each hypothesis $(p^{\text{ub}}_i, p^{\text{lb}}_i)$, the critical rank estimate $\hat{r}$, and the corresponding BH threshold estimate $\hat{\tau} = \frac{\hat{r}}{m} \alpha$. Based on the current estimate, the hypotheses can be categorized as:
\begin{equation}\label{eq:h_set}
    \begin{split}
        & \text{Certain to be greater than $\hat{\tau}$: }~~~~\mathcal{C}_{\text{g}} = \{i: p^{\text{lb}}_i > \hat{\tau}\} \\
        & \text{Certain to be less than $\hat{\tau}$: }~~~~\mathcal{C}_{\text{l}} = \{i: p^{\text{ub}}_i \leq \hat{\tau}\} \\
        & \text{Uncertain: }~~~~\mathcal{U} = \{i: p^{\text{lb}}_i \leq \hat{\tau} < p^{\text{ub}}_i\}.
    \end{split}
\end{equation}

As shown in Fig.\ref{fig:progression}a, at initialization the critical rank estimate $\hat{r}$ is set to be the largest possible value $m$ and all hypotheses are uncertain as compared to the estimated BH threshold $\hat{\tau}$; they will be further sampled.
In Fig.\ref{fig:progression}b, as more MC samples narrow the confidence intervals, some hypotheses will become certain to be greater/less than $\hat{\tau}$; they will leave $\mathcal{U}$ and stop being sampled. 
At the same time, according to \eqref{eq:fMC_critical_rank} the estimate $\hat{r}$ cannot be the true critical rank $r^*$ if more than $m-\hat{r}$ p-values are greater than the corresponding estimated BH threshold $\hat{\tau}$. Therefore, we can decrease $\hat{r}$ and update $\mathcal{C}_{\text{g}}$ until $m-\hat{r} = \vert \mathcal{C}_{\text{g}} \vert$. Note that the estimated BH threshold will be reduced correspondingly. 
The algorithm repeats such a sample-and-update step until the set $\mathcal{U}$ becomes empty as shown in Fig.\ref{fig:progression}c. Then it outputs the discoveries.

For practical consideration, every time a batch of MC samples is obtained for each hypothesis in $\mathcal{U}$ instead of one, with batch sizes $[h_1, \cdots, h_L]$ prespecified as $h_l = \gamma^l$ for some $\gamma >1$. Here, $\sum_{l=1}^L h_l = n$ and $L = \Theta(\log n)$. The batched sizes are chosen as a geometric series so that 1) after every batch, the confidence intervals for a hypothesis being sampled will shrink by roughly a constant factor; 2) the number of batches L is relatively small to save the computation on updating the estimated quantities. In the actual implementation, we chose $h_1=100$ and $\gamma=1.1$ for all experiments.

\subsection{Confidence bounds \label{SubSec:CI}}
Since the fMC p-values are themselves random, the CBs are defined conditional on the fMC p-values, where the MC samples for the $i$th hypothesis are drawn \emph{uniformly and without replacement} from the set of all $n$ MC samples $\{b_{i,j}\}_{j\in [n]}$. This gives finite population CBs whose uncertainty is $0$ when $n$ samples are obtained. 

Specifically, for any $k \in [n]$, let $\tilde{B}_1, \tilde{B}_2, \cdots, \tilde{B}_k$ be random variables sampled uniformly and without replacement from the set $\{b_j\}_{j \in [n]}$ and $\hat{p}_k = \frac{1}{k} \left(1 \vee \sum_{i=j}^k \tilde{B}_j\right)$.
The $\delta$-CBs $p^{\text{ub}}, p^{\text{lb}}$ satisfy $\P(p \geq p^{\text{ub}}) \leq \delta$ and $\P(p \leq p^{\text{lb}}) \leq \delta.$
For the analysis, we assume that the CBs take the form 
\begin{align}\label{eq:CB_form}
    p^{\text{ub}} = \hat{p}_k + \sqrt{\frac{\hat{p}_k c(\delta)}{k}},~~~~ p^{\text{lb}} = \hat{p}_k - \sqrt{\frac{\hat{p}_k c(\delta)}{k}}, 
\end{align}
where $c(\delta)$ is a constant depending only on the probability $\delta$.
Eq. \eqref{eq:CB_form} represents a natural form that most CBs satisfy with different $c(\delta)$. We consider this general form to avoid tying \texttt{AMT} up with a specific type of CB, for the adaptive procedure is independent of the choice of CB. All binomial confidence intervals can be used here for this sample-without-replacement case \cite{bardenet2015concentration}; we chose Agresti-Coull confidence interval \cite{agresti1998approximate} for the actual implementation. 



\subsection{Comparison to sMC\label{subsec:smc}}
\begin{figure}
    \centering
    \centerline{\includegraphics[width=0.45\textwidth]{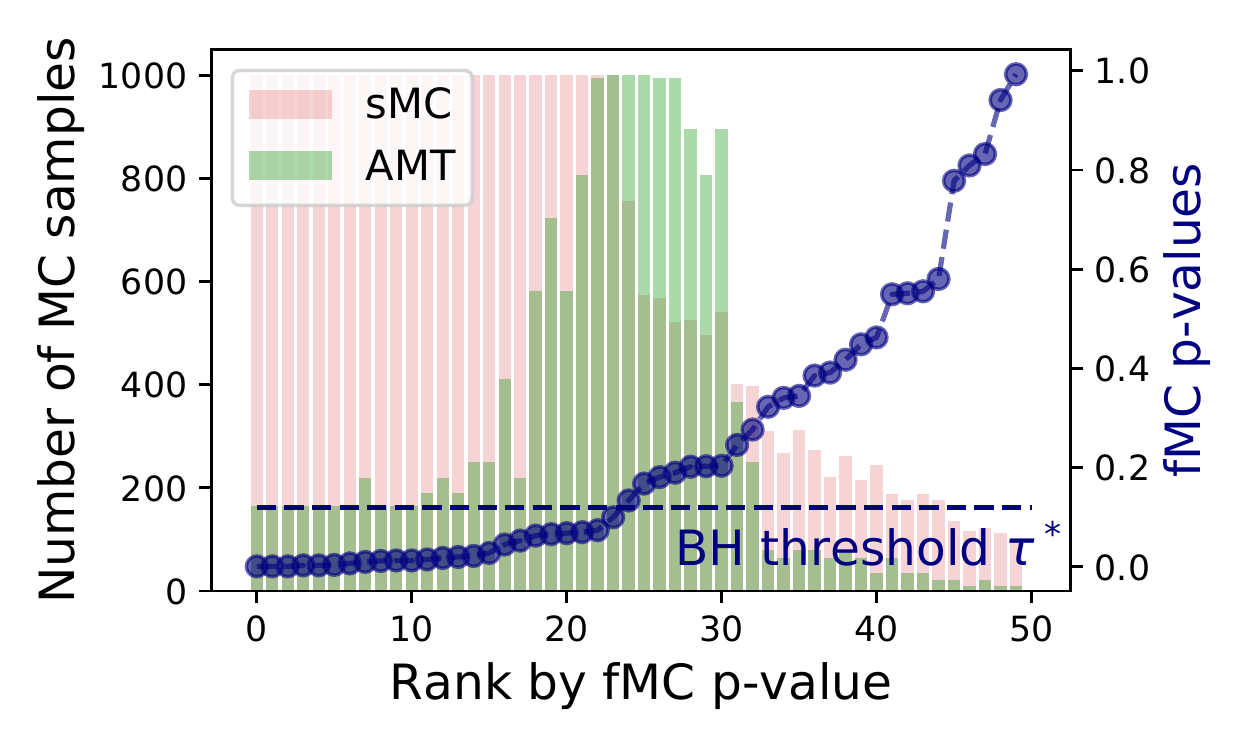}}
    \caption{A toy example with n=1000 and m=50. sMC computes more MC samples for hypotheses with smaller p-values while \texttt{AMT} computes more MC samples for hypotheses with p-values closer to the BH threshold.}
    \label{fig:compare}
\end{figure}

In sMC, for each hypothesis, MC samples are obtained until either $s$ extreme values (MC observation equal to 1) are observed, or $n$ total permutations are computed with $S$ total successes, where $S < s$. Let $K$ be the number of MC samples obtained for the hypothesis. The sMC p-value is defined as 
\begin{align}
    P^{\text{sMC}} \eqdef \left\{ \begin{array}{cc}
    \frac{s}{K}     & K < n \\
    \frac{S+1}{n+1}     & K=n 
    \end{array}\right..
\end{align}
After this, BH is applied on the set of sMC p-values to obtain the testing result.

As shown in Fig.\ref{fig:compare}, sMC computes more MC samples for hypotheses with smaller p-values while \texttt{AMT} computes more MC samples for hypotheses with p-values closer to the BH threshold, effectively addressing the hardness of recovering the fMC result, i.e., deciding how each fMC p-value compares with the BH threshold. See also Supp. Sec. \ref{subsec:sMC_param} for how to choose the parameter $s$.

\section{Theoretical Guarantee}
We present the high probability recovery and FDR control result, the upper bound, and the lower bound in order. 
For the upper bound, we first state the $\tilde{O}(\sqrt{n}m)$ result in Proposition \ref{crly:ub}, which is a direct consequence of the main instance-wise upper bound as stated in Theorem \ref{thrm:instance_ub}.
\subsection{Correctness}
\begin{theorem}(Correctness)\label{thrm:recovery}
    \texttt{AMT} recovers the fMC result with probability at least $1-\delta$, i.e.,
    \begin{align}\label{eq:recovery}
        \P(\mathcal{R}^{\mathrm{AMT}} = \mathcal{R}^{\mathrm{fMC}}) \geq 1-\delta. 
    \end{align}
    Moreover, \texttt{AMT} controls FDR at level $\pi_0 \alpha+\delta$, where $\pi_0$ is the null proportion. 
\end{theorem}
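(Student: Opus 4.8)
The plan is to prove the two assertions in order, treating the high-probability recovery \eqref{eq:recovery} as the crux and obtaining the FDR bound as a short consequence of it together with the standard BH guarantee. The organizing device is the ``good event'' $\mathcal{E}$ on which \emph{every} confidence bound ever constructed brackets the corresponding fMC p-value, i.e.\ $p^{\mathrm{lb}}_i \le P_i^{\mathrm{fMC}} \le p^{\mathrm{ub}}_i$ for all $i\in[m]$ and at each of the at most $L$ batch updates. Since each hypothesis carries two bounds updated at most $L$ times and each bound is designed to fail with probability at most $\frac{\delta}{2mL}$, a union bound over these $2mL$ bounds gives $\P(\mathcal{E}) \ge 1 - 2mL\cdot\frac{\delta}{2mL} = 1-\delta$. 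All remaining work on the recovery claim is then deterministic, carried out on $\mathcal{E}$.

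The heart of the argument is to show that, on $\mathcal{E}$, the algorithm terminates with $\hat r = r^*$ and $\mathcal{C}_{\mathrm l} = \mathcal{R}^{\mathrm{fMC}}$. First I would reformulate the critical rank \eqref{eq:fMC_critical_rank} through the counting function $N(r) \eqdef \lvert\{i : P_i^{\mathrm{fMC}} \le \frac{r}{m}\alpha\}\rvert$: the condition $P_{(r)}^{\mathrm{fMC}} \le \frac{r}{m}\alpha$ is equivalent to $N(r)\ge r$, so $r^* = \max\{r : N(r)\ge r\}$. In parallel I would define the algorithm's surrogate count $\tilde N(r) \eqdef m - \lvert\mathcal{C}_{\mathrm g}(r)\rvert = \lvert\{i : p^{\mathrm{lb}}_i \le \frac{r}{m}\alpha\}\rvert$ built from the current lower bounds. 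Both $N$ and $\tilde N$ are nondecreasing in $r$, and using this monotonicity one checks that the ``reduce $\hat r$ one at a time'' search returns the largest fixed point of $\tilde N$, which moreover equals $\max\{r : \tilde N(r) \ge r\}$ (if $\tilde N(r_0)\ge r_0$ while $\tilde N(r_0+1) < r_0+1$, monotonicity forces $\tilde N(r_0)=r_0$). A small compatibility check is needed across iterations: as samples accumulate each $p^{\mathrm{lb}}_i$ only increases, so $\tilde N(r)$ only decreases and the sought fixed point only moves down, which is exactly why restarting the search from the previous $\hat r$ and decreasing is correct.

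I would then close the sandwich. On $\mathcal{E}$ we have $p^{\mathrm{lb}}_i \le P_i^{\mathrm{fMC}}$, hence $N(r)\le \tilde N(r)$ for all $r$ and therefore $r^* \le \hat r$. For the reverse inequality I would use termination: $\mathcal{U}=\emptyset$ means each hypothesis lies in $\mathcal{C}_{\mathrm g}$ or $\mathcal{C}_{\mathrm l}$, and on $\mathcal{E}$ the membership tests $p^{\mathrm{lb}}_i > \hat\tau$ (resp.\ $p^{\mathrm{ub}}_i \le \hat\tau$) force $P_i^{\mathrm{fMC}} > \hat\tau$ (resp.\ $P_i^{\mathrm{fMC}} \le \hat\tau$). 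Hence exactly $\lvert\mathcal{C}_{\mathrm l}\rvert = \hat r$ of the fMC p-values are $\le \hat\tau = \frac{\hat r}{m}\alpha$, i.e.\ $N(\hat r)=\hat r$, so $\hat r\le r^*$. Combining gives $\hat r = r^*$, whence $\hat\tau = \tau^*$ and $\mathcal{R}^{\mathrm{AMT}} = \mathcal{C}_{\mathrm l} = \{i : P_i^{\mathrm{fMC}}\le\tau^*\} = \mathcal{R}^{\mathrm{fMC}}$ on $\mathcal{E}$, proving \eqref{eq:recovery}. For the FDR claim I would condition on the data, equivalently on the full set of $nm$ MC outcomes and thus on the fMC p-values and on $\mathcal{R}^{\mathrm{fMC}}$, leaving only the adaptive sampling random. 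Splitting the false discovery proportion on the recovery event and bounding it by $1$ on the failure event yields $\E[\mathrm{FDP}^{\mathrm{AMT}}\mid \text{data}] \le \mathrm{FDP}^{\mathrm{fMC}} + \P(\mathcal{R}^{\mathrm{AMT}}\neq\mathcal{R}^{\mathrm{fMC}}\mid \text{data}) \le \mathrm{FDP}^{\mathrm{fMC}} + \delta$; taking expectations and invoking the standard BH control for the conservative fMC p-values, $\E[\mathrm{FDP}^{\mathrm{fMC}}] \le \pi_0\alpha$, gives $\mathrm{FDR}^{\mathrm{AMT}} \le \pi_0\alpha + \delta$.

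The main obstacle I anticipate is the deterministic combinatorial core rather than the probabilistic bookkeeping: proving rigorously that the top-down fixed-point search returns $\max\{r : \tilde N(r)\ge r\}$, that this coincides with $r^*$ on $\mathcal{E}$, and handling the cross-iteration interplay between shrinking intervals and the moving fixed point, as well as the edge case $r^*=0$ where no self-consistent rank exists. By contrast, the union bound defining $\mathcal{E}$ and the conditional $\mathrm{FDP}$ decomposition are routine once the recovery statement is in hand.
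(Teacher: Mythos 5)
Your proposal follows the same skeleton as the paper's proof: the good event $\mathcal{E}$ with a union bound over the $2mL$ confidence bounds, a deterministic sandwich argument on $\mathcal{E}$ showing the algorithm terminates with $\hat r = r^*$ and $\mathcal{C}_{\mathrm{l}} = \mathcal{R}^{\mathrm{fMC}}$, and the FDR bound obtained by splitting the FDP on the recovery event and invoking BH's guarantee for the conservative fMC p-values. Your counting-function formulation ($N(r)$ versus $\tilde N(r)$, largest fixed point) is an equivalent, somewhat more explicit rendering of the paper's two claims (the algorithm cannot terminate while $\hat r > r^*$; once $\hat r = r^*$ it cannot decrease further), so in substance this is the same proof.

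One step, however, is justified incorrectly: the claim that ``as samples accumulate each $p^{\mathrm{lb}}_i$ only increases.'' For CBs of the form \eqref{eq:CB_form} (e.g.\ Agresti--Coull, as used in the implementation), the bound is re-centered at the current empirical mean after every batch, so a lower CB can \emph{decrease} when the empirical mean drops; nothing in the paper forces the intervals to be nested over time. This matters for your argument: to conclude that the top-down search never returns a value below $r^*$, you need the invariant $\tilde N(r)\le r$ at every rank the search visits — otherwise, if the search reached $r^*$ with $\tilde N(r^*)>r^*$ strictly, the equality stopping condition $\tilde N(\hat r)=\hat r$ would not be met there and the search would pass below $r^*$. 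The invariant does hold, but for a different reason than the one you give: hypotheses in $\mathcal{C}_{\mathrm{g}}$ are no longer sampled, so their lower bounds are frozen above the previous threshold $\hat\tau$, which yields $\tilde N(\hat r_{\mathrm{prev}})\le \hat r_{\mathrm{prev}}$ at the start of each Update, and then $\tilde N(r) < r$ propagates downward by monotonicity of $\tilde N$ in $r$ until equality is hit. (Alternatively, one can simply enforce nested CBs by taking running intersections.) The paper's own proof sidesteps this issue: it argues directly that on $\mathcal{E}$ any hypothesis with $p_i\le\tau^*$ satisfies $p_i^{\mathrm{lb}} < p_i \le \tau^* \le \tfrac{r}{m}\alpha$ for every rank $r\ge r^*$, hence can never enter $\mathcal{C}_{\mathrm{g}}$ before the search reaches $r^*$, at which point it stops. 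With this repair, the rest of your argument — the fixed-point characterization, the termination step giving $N(\hat r)=\hat r$ and hence $\hat r\le r^*$, and the conditional FDP decomposition for the FDR claim — is sound.
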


\begin{remark}
A stronger version is actually proved for \eqref{eq:recovery}: \texttt{AMT} recovers the fMC result with probability at least $1-\delta$ conditional on any set of fMC p-values $\{P_i^{\text{fMC}}\} = \{p_i\}$, i.e.,
\begin{align}
    \P \left(\mathcal{R}^{\mathrm{AMT}} = \mathcal{R}^{\mathrm{fMC}} \Big\vert \{P_i^{\text{fMC}}\} = \{p_i\} \right) \geq 1-\delta. 
\end{align} 
This also corresponds to the $\delta$-correctness definition in the lower bound Theorem \ref{thrm:lb}.
For the FDR control argument, $\delta$ is negligible as compared to $\alpha$; $\delta$ is set to be a $o(1)$ term, e.g., $\delta = \frac{1}{m}$. Hence, $\pi_0 \alpha+\delta \leq \alpha$ in most cases. 
\end{remark}

\subsection{Upper bound}
Without loss of generality, let us assume that the ideal p-values, corresponding to the generation of the data, are drawn i.i.d. from an unknown distribution $F(p)$, which can be understood as a mixture of the null distribution and the alternative distribution, i.e., $F(p) = \pi_0 p + (1-\pi_0) F_1(p)$, where $\pi_0$ is the null proportion and $F_1(p)$ is the alternative distribution. The following result shows that the sample complexity of \texttt{AMT} is $\tilde{O}(\sqrt{n}m)$ under mild assumptions of $F(p)$. 

\begin{proposition}\label{crly:ub}
Assume that the ideal p-values are drawn i.i.d. from some unknown distribution $F(p)$ with density $f(p)$ that is either constant ($f(p)=1$) or continuous and monotonically decreasing. 
With $\delta = \frac{1}{m\sqrt{n}}$, the total number of MC samples for \texttt{AMT} satisfies
\begin{align}
    \E[N] = \tilde{O}(\sqrt{n}m),
\end{align}
where $\tilde{O}$ hides logarithmic factors with respect to $m$ and $n$.
\end{proposition}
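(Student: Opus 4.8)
The plan is to obtain Proposition~\ref{crly:ub} by taking the expectation of the instance-wise bound of Theorem~\ref{thrm:instance_ub} over the random draw of the ideal p-values. I expect that bound to control the expected number of samples spent on hypothesis $i$ by $c(\tfrac{\delta}{2mL})\bigl(\tfrac{P_i^{\mathrm{fMC}}\vee\tau^*}{(P_i^{\mathrm{fMC}}-\tau^*)^2}\wedge n\bigr)$ up to constants: hypothesis $i$ leaves $\mathcal{U}$ once its confidence half-width $\sqrt{\hat p_k\,c/k}$ falls below the gap $|P_i^{\mathrm{fMC}}-\tau^*|$, i.e. after about $c\,P_i^{\mathrm{fMC}}/(P_i^{\mathrm{fMC}}-\tau^*)^2$ draws, and never after more than $n$. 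With $\delta=\tfrac{1}{m\sqrt n}$ the prefactor is $c(\tfrac{\delta}{2mL})=O(\log\tfrac{mL}{\delta})=O(\log m+\log n)=\tilde O(1)$. The first step is to discard the event on which some confidence bound fails: its probability is at most $\delta$, on it $N\le nm$ holds deterministically, so it contributes at most $\delta\,nm=\sqrt n=O(m\sqrt n)$ to $\E[N]$. This is exactly why $\delta$ is taken to be $\tfrac{1}{m\sqrt n}$. It then suffices to bound $\E\bigl[\sum_{i\in[m]}\bigl(\tfrac{P_i^{\mathrm{fMC}}\vee\tau^*}{(P_i^{\mathrm{fMC}}-\tau^*)^2}\wedge n\bigr)\bigr]$ by $O(m\sqrt n)$.

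The analytic heart is a single-hypothesis estimate: for any fixed threshold $\tau\in[0,1]$ and any law on $[0,1]$ with density bounded by $C$,
\begin{align*}
\int_0^1\Bigl(\tfrac{t\vee\tau}{(t-\tau)^2}\wedge n\Bigr)g(t)\,dt=O\bigl(C\sqrt n\bigr).
\end{align*}
To see this, split at $|t-\tau|=\sqrt{\tau/n}$, the radius where the integrand saturates the cap. Inside this window of width $O(\sqrt{\tau/n})$ the integrand is at most $n$, contributing $O\bigl(n\sqrt{\tau/n}\bigr)=O(\sqrt{\tau n})$; just outside, $\int_{\sqrt{\tau/n}}^{1}\tfrac{\tau}{u^2}\,du=O(\sqrt{\tau n})$; and the bulk $t=\Theta(1)$ contributes only $O(1)$ since there the integrand is $O(1/t)$ and there is no singularity at $t=0$ (the numerator is $\tau$ for $t<\tau$). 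Hence each hypothesis costs $O(\sqrt n)$ in expectation and summing over $m$ gives $O(m\sqrt n)$. This is the only place the distributional hypothesis is used: a continuous monotonically decreasing density is bounded by $f(0)<\infty$ (as is the constant density $f\equiv1$), and the law $G$ of the fMC p-value $\tfrac{1+\mathrm{Bin}(n,P^\infty)}{n+1}$ is a Binomial smoothing of $F$, so it too has a density bounded by $O(f(0))$; all the display needs is that $G$ places mass $O(w)$ on every interval of width $w$.

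The main obstacle is that $P_i^{\mathrm{fMC}}$ and $\tau^*$ are \emph{dependent}, since $\tau^*$ is a function of all $m$ fMC p-values, so I cannot directly set $t=P_i^{\mathrm{fMC}}$ against the fixed-$\tau$ estimate. I would resolve this with a leave-one-out argument exploiting the stability of the BH threshold. Let $\tau^*_{(-i)}$ be the BH threshold computed from the other $m-1$ fMC p-values; it is independent of $P_i^{\mathrm{fMC}}$, and because deleting one hypothesis changes the critical rank by at most one, $|\tau^*-\tau^*_{(-i)}|=O(\alpha/m)$. Conditioning on the other p-values freezes $\tau^*_{(-i)}$, after which $P_i^{\mathrm{fMC}}$ is an independent draw from $G$ and the display applies with $\tau=\tau^*_{(-i)}$, yielding $O(\sqrt n)$. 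What must be checked is that replacing $\tau^*$ by $\tau^*_{(-i)}$ does not distort the near-singular part of the integrand: in the regime of interest the shift $O(\alpha/m)$ is no larger than the saturation radius $\sqrt{\tau/n}$ (equivalently $n=O(m^2)$, which holds in all practical cases, e.g. the GWAS setting $n=10^8\ll m^2=10^{12}$), so points within $O(\alpha/m)$ of either threshold are already in the capped regime and the order of the integral is preserved. This stability-plus-independence step is where I expect the real work to lie; an alternative is to show $\tau^*$ concentrates around the fixed point $\tau_\infty$ solving $F(\tau_\infty)=\tau_\infty/\alpha$ (with $\tau_\infty=0$ in the global-null constant-density case), but concentration alone is too coarse to localize $\tau^*$ within the saturation radius when $n\gg m$, so I would favor the leave-one-out route.

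Finally I would reassemble the pieces. The geometric batching $h_l=\gamma^l$ overshoots each hypothesis's target sample count by at most the constant factor $\gamma$, and together with the prefactor $c(\tfrac{\delta}{2mL})=O(\log m+\log n)$ it accounts for exactly the logarithmic factors absorbed into $\tilde O$. Adding the good-event bound $\tilde O(m\sqrt n)$ to the failure-event contribution $\sqrt n$ gives $\E[N]=\tilde O(\sqrt n\,m)$, as claimed.
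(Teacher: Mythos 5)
Your skeleton (start from Theorem~\ref{thrm:instance_ub}, charge the confidence-bound failure event $\delta mn$, and bound a per-hypothesis cost of order $\sqrt{n}$ by integrating a capped inverse-square singularity against a bounded density) is the same as the paper's, and your single-threshold integral estimate is essentially the paper's final computation. But there are two genuine gaps. First, the per-hypothesis cost you extract from Theorem~\ref{thrm:instance_ub}, namely $\bigl(\tfrac{P_i^{\mathrm{fMC}}\vee\tau^*}{(P_i^{\mathrm{fMC}}-\tau^*)^2}\wedge n\bigr)$, is \emph{not} implied by that theorem. For hypotheses above the threshold ($i>r^*$) the theorem's denominator is $\Delta_{(i)}=p_{(i)}-\tfrac{i}{m}\alpha$, the distance to the BH line at the hypothesis's \emph{own rank}; since $\tfrac{i}{m}\alpha\geq\tau^*$, this gap is \emph{smaller} than $p_{(i)}-\tau^*$, so Theorem~\ref{thrm:instance_ub}'s terms (with the extra $\max_{k\geq i}$) are \emph{larger} than your simplified form, and you cannot substitute one for the other. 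Your heuristic that hypothesis $i$ leaves $\mathcal{U}$ once its half-width drops below $|P_i^{\mathrm{fMC}}-\tau^*|$ is false for the top-down algorithm: a hypothesis just above $\tau^*$ cannot be resolved until the estimate $\hat\tau$ has descended, which depends on other hypotheses (the paper's remark states the $\max$ is necessary). Converting $\Delta_{(k)}$ into $c_1\,|p_{(k)}-\hat\tau|$ for a constant $c_1>0$ is precisely where the paper's proof spends the monotone-decreasing-density assumption: it uses $\tilde F_{n,m}(P_{(k)}^{\mathrm{fMC}})\geq k/m$ to get $\Delta_{(k)}\geq\tilde g_{n,m}(P_{(k)}^{\mathrm{fMC}})$ with $\tilde g_{n,m}(x)=x-\tilde F_{n,m}(x)\alpha$, and then the transversality $g'(x)=1-f(x)\alpha\geq c_0>0$ above the limiting threshold, transferred to $\tilde g_{n,m}$ by Glivenko--Cantelli uniform convergence, to get linear growth away from $\tilde\tau_{n,m}$. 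Your proposal uses the density assumption only to bound the density of the fMC p-value law, which misses its main role.

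Second, your leave-one-out step rests on a false claim: deleting one p-value can change the BH critical rank by far more than one. Take $m-1$ p-values tied at $\tfrac{m-1}{m}\alpha$ and one p-value equal to $1$: BH rejects all $m-1$ tied hypotheses ($r^*=m-1$), yet deleting a single tied p-value drops the rejection count to zero, so $|\tau^*-\tau^*_{(-i)}|$ can be of order $\alpha$, not $O(\alpha/m)$. Any high-probability repair of this stability under the distributional assumption requires exactly the transversality-plus-uniform-convergence machinery above, so the leave-one-out route is not cheaper than the paper's; and even granting $O(\alpha/m)$ stability, your own accounting needs the shift to fit inside the saturation radius, forcing $n=O(m^2)$ --- a restriction absent from the proposition, whose regime is only $n=\Omega(m)$. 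The paper avoids decoupling $P_i^{\mathrm{fMC}}$ from the threshold by triangle inequality altogether: it keeps the gap measured against the random threshold $\tilde\tau_{n,m}$ itself (coherently, through $\tilde g_{n,m}$), and only at the end replaces $\tilde\tau_{n,m}$ by the deterministic limit $\tau^\infty$ using $\tilde\tau_{n,m}\overset{p}{\rightarrow}\tau^\infty$, Slutsky's theorem and the continuous mapping theorem, before evaluating the integral split at $|p-\tau^\infty|=1/\sqrt{n}$.
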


\begin{remark}
The asymptotic regime is when $m\rightarrow\infty$ while $n=\Omega(m)$. This is because the number of MC samples $n$ should always be larger than the number of hypothesis tests $m$.
A more complete result including $\delta$ is $\tilde{O}\left(\sqrt{n}m \log\frac{1}{\delta} + \delta mn\right)$.

For the assumption on the ideal p-value distribution $F(p)$, $f(p)=1$ corresponds to the case where all hypotheses are true null while $f(p)$ being continuous and monotonically decreasing essentially assumes that the alternative p-values are stochastically smaller than uniform. 
Such assumption includes many common cases, e.g., when the p-value is calculated from the z-score $Z_i \sim \mathcal{N}(\mu, 1)$ with $\mu=0$ under the null and $\mu>0$ under the alternative \cite{hung1997behavior}.

A strictly weaker but less natural assumption is sufficient for the $\tilde{O}(\sqrt{n}m)$ result. Let $\tau^\infty = \sup_{[0,1]}\{\tau: \tau \leq F(\tau) \alpha\}$. It assumes that $\exists c_0, c_1 > 0$ s.t. $\forall p \in [\tau^\infty-c_0, 1]$, $f(p)\leq \frac{1}{\alpha}-c_1$. As shown in the proof, $\tau^\infty$ is the BH threshold in the limiting case and $f(\tau^\infty) < \frac{1}{\alpha}$ as long as $f(p)$ is strictly decreasing on $[0,\tau^\infty]$. Hence, this weaker assumption contains most practical cases and the $\tilde{O}(\sqrt{n}m)$ result holds generally. However, this weaker assumption involves the definition of $\tau^\infty$ which is technical. We therefore chose the stronger but more natural assumption in the statement of the corollary.
\end{remark}

Proposition \ref{crly:ub} is based on an instance-wise upper bound conditional on the fMC p-values $\{P_i^{\text{fMC}}\}= \{p_i\}$, stated as follows.
\begin{theorem} \label{thrm:instance_ub}
Conditioning on any set of fMC p-values $\{P_i^{\mathrm{fMC}}\} = \{p_i\}$, let $p_{(i)}$ be the $i$th smallest p-value and $\Delta_{(i)} = \vert p_{(i)} - \frac{i \vee r^*}{m} \alpha \vert$.
For the CBs satisfying \eqref{eq:CB_form}, the total number of MC samples $N$ satisfies 
\begin{equation*} \label{eq:thrm:bh}
    \begin{split}
        & \E \left[N \Big\vert \{P_i^{\mathrm{fMC}}\} = \{p_i\}\right] \leq  \sum_{i=1}^{r^*} n \wedge \left( \frac{4(1+\gamma)^2 c\left(\frac{\delta}{2 mL}\right)  \tau^*}{\Delta_{(i)}^2}\right) \\
    & + \sum_{i=r^*+1}^m n \wedge \left( \max_{k\geq i} \frac{4(1+\gamma) c\left(\frac{\delta}{2mL}\right) p_{(k)} }{\Delta_{(k)}^2} \right)  + \delta mn.
    \end{split}
\end{equation*}
\end{theorem}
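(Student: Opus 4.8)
The plan is to condition on a single high-probability good event under which every confidence bound produced during the run is valid, bound the number of samples each hypothesis receives on that event, and push the complement into the additive $\delta m n$ term. Write $\delta' \eqdef \frac{\delta}{2mL}$ and let $\mathcal{E}$ be the event that $p^{\text{lb}}_i \le p_i \le p^{\text{ub}}_i$ for every $i\in[m]$ at every one of the $L$ batches. Each $\delta'$-CB violates one of these inequalities with probability at most $\delta'$, so a union bound over the $m$ hypotheses, the $L$ batches, and the two directions gives $\P(\mathcal{E}^c) \le m\cdot L\cdot 2\cdot \delta' = \delta$. On $\mathcal{E}^c$ at most $mn$ samples are ever drawn, which contributes the $\delta m n$ summand; everything else is argued on $\mathcal{E}$.

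Next I would set up a uniform translation between samples used and gap cleared. Since a hypothesis accrues one batch of samples in every round for which it lies in $\mathcal{U}$ and its interval only narrows, the samples allocated to hypothesis $i$ are the cumulative batch size up to the round in which it is resolved for good, and no hypothesis ever exceeds the finite pool of $n$ samples (hence every bound appears under an $n\wedge$). By \eqref{eq:CB_form} the half-width after $k$ samples is $\sqrt{\hat p_k\, c(\delta')/k}$, so certifying that the interval sits entirely on one side of a threshold at distance $g$ from $p_i$ requires $k \gtrsim \hat p_k\, c(\delta')/g^2$; because the batch sizes grow geometrically ($h_l=\gamma^l$), the realized stopping time overshoots this ideal $k$ by at most a factor $\gamma$, which is where the $(1+\gamma)$ constants come from. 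I would also record the monotonicity facts, provable on $\mathcal{E}$, that $\hat r$ is non-increasing across rounds with $\hat\tau$ descending from $\alpha$ to $\tau^*$, and that the run terminates at $\hat r=r^*$ (Theorem \ref{thrm:recovery}).

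For a discovery ($i\le r^*$, so $p_{(i)}\le\tau^*$) the exit condition is $p^{\text{ub}}_i \le \hat\tau$. Because $\hat\tau$ only decreases and a discovery that is resolved early can be pulled back into $\mathcal{U}$ once $\hat\tau$ drops below its (still wide) upper bound, in the worst case the hypothesis must be certified against the \emph{final} threshold $\tau^*$, where the relevant gap is exactly $\Delta_{(i)}=\tau^*-p_{(i)}$. Bounding the scale $\hat p_k$ at resolution by $\tau^*$ (using $p_{(i)}\le\tau^*$ and that $\hat p_k$ is pinned near $\hat\tau\approx\tau^*$ while $i\in\mathcal{U}$) and folding in the geometric overshoot yields the per-discovery bound $n\wedge\big(4(1+\gamma)^2 c(\delta')\,\tau^*/\Delta_{(i)}^2\big)$; summing over $i\le r^*$ gives the first sum.

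The non-discovery sum is the \emph{main obstacle}, because the exit of a rank-$i$ hypothesis ($i>r^*$, where $p_{(i)}>\frac{i}{m}\alpha$ so $\Delta_{(i)}=p_{(i)}-\frac{i}{m}\alpha$) is coupled to the progress on all higher-ranked hypotheses through the update rule. Its exit condition $p^{\text{lb}}_i>\hat\tau$ can be met only after the decreasing threshold $\hat\tau=\frac{\hat r}{m}\alpha$ has fallen past $p^{\text{lb}}_i$; but $\hat\tau$ falls only as $\mathcal{C}_{\text{g}}$ grows, and by the invariant $\hat r=m-|\mathcal{C}_{\text{g}}|$ this descent requires the hypotheses of larger rank to be certified greater first. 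Hence the sampling of rank $i$ is held up until the \emph{slowest} of the hypotheses $k\ge i$ is resolved, and bounding each such certification conservatively by its gap $\Delta_{(k)}$ to the rank-threshold $\frac{k}{m}\alpha$ with scale $p_{(k)}$ produces $n\wedge\max_{k\ge i}\big(4(1+\gamma)c(\delta')\,p_{(k)}/\Delta_{(k)}^2\big)$. Making this coupling rigorous --- pinning down exactly which threshold level gates the exit of rank $i$ and verifying that the $\max_{k\ge i}$ dominates in every round on $\mathcal{E}$ --- is the delicate step; once it is in place, summing over $i>r^*$ and adding the $\delta m n$ term from $\mathcal{E}^c$ completes the bound.
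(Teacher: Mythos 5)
Your skeleton matches the paper's proof in its outer structure: the same good event $\mathcal{E}$ with the union bound $\P(\mathcal{E}^c)\le 2mL\cdot\frac{\delta}{2mL}=\delta$, the same $\delta mn$ contribution from $\mathcal{E}^c$, the same $(1+\gamma)$ overshoot factors from the geometric batches, and essentially the paper's argument for the discoveries $i\le r^*$ (certify against the terminal threshold $\tau^*$, gap $\Delta_{(i)}$, scale $\tau^*$). But the second sum --- the non-discoveries --- is where the entire content of the theorem lies, and there you have only restated the difficulty rather than resolved it: you yourself flag that ``making this coupling rigorous \ldots is the delicate step'' and stop. Worse, the heuristic you propose for that step is not the mechanism that can work. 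A hypothesis of rank $i>r^*$ can be stalled by uncertain non-discoveries of \emph{any} rank in $(r^*,m]$, not only ranks $k\ge i$: the threshold $\hat{\tau}=\frac{\hat{r}}{m}\alpha$ pauses until \emph{some} uncertain hypotheses (of whatever rank) enter $\mathcal{C}_{\text{g}}$, and while it pauses at a value just below $p_{(i)}$ the rank-$i$ hypothesis keeps being sampled until its width falls below $p_{(i)}-\hat{\tau}$, which can be far smaller than $\Delta_{(i)}$. So a per-rank bound of the form ``rank $i$ receives at most $n\wedge\max_{k\ge i}(\cdot)$ samples,'' derived from rank $i$'s own exit condition, is not provable; the accounting must be organized differently.

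The paper's device, absent from your proposal, is to index the non-discoveries by their (random) \emph{order of entry} into $\mathcal{C}_{\text{g}}$ rather than by p-value rank. Let $g_i$ be the $i$th hypothesis to enter and $g_j$ ($j\le i$) the first one entering in the same round $t_j$; since both are sampled in the same rounds up to $t_j$, $N_{g_i}(T+1)\le(1+\gamma)N_{g_j}(t_j)$. For $g_j$ two facts hold: (a) its pre-sampling lower CB satisfies $p^{\text{lb}}_{g_j}(t_j)\le\hat{\tau}(t_j)=\frac{m-j+1}{m}\alpha$, because $\vert\mathcal{C}_{\text{g}}(t_j)\vert=j-1$ pins the threshold exactly; and (b) its CI width is at least $p_{(m-j+1)}-\frac{m-j+1}{m}\alpha=\Delta_{(m-j+1)}$, because on $\mathcal{E}$ some uncertain hypothesis has p-value at least $p_{(m-j+1)}$ --- it cannot be hiding in $\mathcal{C}_{\text{l}}$ since $p_{(m-j+1)}>\frac{m-j+1}{m}\alpha$ by the definition of $r^*$ --- and $g_j$ has the largest upper CB in $\mathcal{U}(t_j)$. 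Feeding (a) and (b) into the CB form \eqref{eq:CB_form} via the quadratic estimate (Lemma \ref{lm:anci_bh}) gives $N_{g_j}(t_j)\le 4c\bigl(\tfrac{\delta}{2mL}\bigr)p_{(m-j+1)}/\Delta_{(m-j+1)}^2$, and since $j\le i$ this is at most $\max_{k\ge m-i+1}4c\bigl(\tfrac{\delta}{2mL}\bigr)p_{(k)}/\Delta_{(k)}^2$; re-indexing the sum over entry positions by $i\mapsto m-i+1$ produces exactly the stated second sum. Without this (or an equivalent) argument, your proposal establishes the first sum and the $\delta mn$ term but not the theorem.
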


\begin{remark}
Note that $L = \log_{\gamma}n$ and for common CBs, $c(\delta) = \log \frac{1}{\delta}$. By setting $\delta = \frac{1}{m}$ and $\gamma=1.1$, we have
\begin{align*}
    & \E \left[N \Big\vert \{P_i^{\mathrm{fMC}}\} = \{p_i\}\right] \leq  \sum_{i=1}^{r^*} n \wedge \left( \frac{18 \log (50 m^2\log n) \tau^*}{\Delta_{(i)}^2}\right) \\
    & + \sum_{i=r^*+1}^m n \wedge \left( \max_{k\geq i} \frac{9 \log (50 m^2\log n) p_{(k)} }{\Delta_{(k)}^2} \right)  + n.
\end{align*}
The terms in the summations correspond to the number of MC samples for each hypothesis test. The denominator $\Delta_{(i)}^2$ represents the hardness for determining if to reject each hypothesis while the hypothesis-dependent numerator ($\tau^*$ in the first summation and $p_{(k)}$ in the second) represents a natural scaling of the binomial proportion confidence bound. The $\max$ in the second term corresponds to the specific behavior of the top-down approach; it is easy to construct examples where this is necessary. 
The factor $\log (50 m^2\log n)$ corresponds to the high probability bound which is $log$ in $m$ and $loglog$ in $n$. This is preferable since $n$ may be much larger than $m$. 
Overall, the bound is conjectured to be tight except improvements on the $\log m$ term (to perhaps $\log\log m$). 
\end{remark}

\subsection{Lower bound}
We provide a matching lower bound for the $\tilde{O}(\sqrt{n}m)$ upper bound. Here, we define a $\delta$-correct algorithm to be one that, conditional on any set of fMC p-values $\{P_i^{\text{fMC}}\} = \{p_i\}$, recovers the fMC result with probability at least $1-\delta$.
\begin{theorem}\label{thrm:lb}
Assume that the ideal p-values are drawn i.i.d. from some unknown distribution $F(p)$ with null proportion $\pi_0>0$.
$\exists \delta_0>0$, s.t. $\forall \delta<\delta_0$, any $\delta$-correct algorithm satisfies
\begin{align}
    \E[N] = \tilde{\Omega}(\sqrt{n}m),
\end{align}
where $\tilde{\Omega}$ hides logarithmic factors with respect to $m$ and $n$.
\end{theorem}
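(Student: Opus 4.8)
The plan is to prove the lower bound by reducing recovery of $\mathcal{R}^{\mathrm{fMC}}$ to a collection of near-threshold binary classification problems and then matching the upper bound of Theorem \ref{thrm:instance_ub} essentially term by term. Conditioned on a configuration of fMC p-values $\{p_i\}$, recovering $\mathcal{R}^{\mathrm{fMC}}$ forces any algorithm to decide, for every hypothesis, on which side of the BH threshold $\tau^*$ its fMC p-value lies. I would first pin down where the $\sqrt{n}m$ cost is concentrated: exactly as in the upper bound, a hypothesis at distance $\Delta$ from $\tau^*$ requires $\Theta(\tau^*/\Delta^2)$ samples (capped at $n$), so the dominant contribution comes from the $\Theta(m/\sqrt{n})$ hypotheses whose fMC p-value lies within $O(1/\sqrt{n})$ of $\tau^*$; each such hypothesis hits the cap $n$, and together they contribute $\Theta(m\sqrt{n})$. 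The two ingredients are therefore (i) a per-hypothesis change-of-measure lower bound showing that $\Omega(\min(n,\ \tau^*\log(1/\delta)/\Delta^2))$ samples are unavoidable, and (ii) a population-level count showing that $\Theta(m/\sqrt{n})$ such near-threshold hypotheses exist with high probability.

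For ingredient (i) I would run a two-point (Le Cam) argument in the sampling-without-replacement model. Fix a base configuration and, for a near-threshold hypothesis $i$, build a neighbour configuration that changes only urn $i$, moving its composition by $d\approx\Delta n$ ones so that $p_i$ crosses $\tau^*$ and $i$ flips its membership in $\mathcal{R}^{\mathrm{fMC}}$. A $\delta$-correct algorithm must output different discovery sets on the two configurations, each with probability at least $1-\delta$; since the observations of all other hypotheses are identically distributed, the information to distinguish must come from draws of urn $i$. By a transportation/Wald inequality together with the data-processing inequality for KL divergence, $\E[\tau_i]$ (the expected number of draws from urn $i$) times the per-draw KL must exceed a quantity of order $\log(1/\delta)$. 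The per-draw hypergeometric KL between urns with $\approx\tau^* n$ and $\approx\tau^* n + d$ ones is $\Theta(\Delta^2/\tau^*)$, so $\E[\tau_i]=\Omega(\tau^*\log(1/\delta)/\Delta^2)$, capped at $n$; I would verify via the finite-population concentration bounds of Bardenet--Maillard that the depletion factor $(1-k/n)$ only affects constants up to the cap, so the bound survives for the near-threshold arms that genuinely require $\Omega(n)$ draws.

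For ingredient (ii) I would use that $\pi_0>0$ forces a genuine uniform null component of density $\pi_0$ in $F$, so the limiting threshold $\tau^\infty=\sup_{[0,1]}\{\tau:\tau\le F(\tau)\alpha\}$ is a positive constant and the density of p-values near it is bounded below by $\pi_0>0$. Standard empirical-process concentration then places $\Theta(m/\sqrt{n})$ of the fMC p-values in an interval $[\tau^*,\ \tau^*+c/\sqrt{n}]$ with high probability, with a spread of distances $\Delta_i$ matching the calculation above. Because the $m$ urns are mutually independent, any algorithm must solve each near-threshold classification separately and the per-hypothesis lower bounds add, yielding $\E[N]\ge\sum_i\min(n,\ \tau^*\log(1/\delta)/\Delta_i^2)=\tilde{\Omega}(m\sqrt{n})$.

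I expect the main obstacle to be the global coupling between the single-hypothesis perturbations and the BH threshold: moving one p-value across $\tau^*$ shifts $\tau^*$ itself by up to $\alpha/m$, which can in principle perturb the classification of the other near-threshold hypotheses and break the clean additivity of the two-point bounds. I would address this either by pinning the threshold with a rigid background ``staircase'' of p-values chosen so that a single perturbation provably cannot move $\tau^*$ past any other hypothesis, or, more robustly, by invoking the simulator technique of Simchowitz et al. for top-$k$ identification, which is tailored to precisely this combinatorial coupling and delivers instance-wise lower bounds that aggregate correctly across arms. A secondary technical point---ensuring the without-replacement change of measure stays valid all the way up to the cap $n$ rather than only in the $k\ll n$ regime---is handled by the finite-population concentration machinery noted above.
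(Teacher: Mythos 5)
Your high-level architecture coincides with the paper's: a per-hypothesis change-of-measure argument showing that each hypothesis whose fMC p-value falls within $O(1/\sqrt{n})$ of the BH threshold costs $\Omega(n)$ samples (the paper's Lemma~\ref{lm:instance_lb}), aggregated against the fact that the uniform null component ($\pi_0>0$) places on the order of $m/\sqrt{n}$ p-values in that window (the paper simply takes an expectation over the null fMC p-values rather than a high-probability count), giving $\tilde{\Omega}(\sqrt{n}m)$. However, the obstacle you single out as the main difficulty --- that moving one p-value across $\tau^*$ shifts the threshold and ``breaks additivity'' --- is not a real obstacle, and the paper needs neither a staircase construction nor the Simchowitz simulator machinery. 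Each two-point comparison is between the \emph{same} base configuration $H_0$ and a perturbation $H_l$ that moves only hypothesis $l$'s p-value down to $\tau^*$; lowering a p-value can only raise the BH threshold, so under $H_l$ hypothesis $l$ is rejected while under $H_0$ it is not, irrespective of how the other near-threshold hypotheses are reclassified under $H_l$. The event $\{l\in\mathcal{R}\}$ therefore separates the two settings, and since every per-hypothesis sample bound is an expectation under the single base measure $\P_0$, the bounds add by plain linearity of expectation; no cross-perturbation consistency is ever needed.

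The step where your plan has a genuine gap is the one you treat as a ``secondary technical point'': the change of measure for sampling \emph{without replacement}. The transportation/Wald-type identity ($\E[\tau_i]$ times per-draw KL $\gtrsim \log(1/\delta)$) is a statement about i.i.d.\ observations; for hypergeometric sampling the draws are not i.i.d., the KL of the observed sequence is not the number of draws times a per-draw KL, and it actually diverges as the number of draws approaches $n$ (the two urns become perfectly distinguishable once the urn is exhausted). Bardenet--Maillard-type concentration inequalities cannot repair this, because they bound deviations of partial sums, not likelihood ratios under a change of measure. This is precisely where the paper's work lies: it lower-bounds the hypergeometric likelihood ratio $L_l(W)/L_0(W)$ directly, restricted to a good event consisting of $\{N_l\leq n/2\}$ (via Markov's inequality), the acceptance event, and a Serfling-type maximal inequality controlling the partial sums $S_{l,t}$ (Lemma~\ref{lm:maximal}), showing the ratio is bounded below by a constant there, which yields the contradiction for small $\delta$. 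So your ingredient (ii) and the aggregation are sound (modulo the fact that both you and the paper implicitly need the realized threshold to be bounded away from zero, which does not follow from $\pi_0>0$ alone), but ingredient (i) as written would not go through without essentially redoing this hypergeometric likelihood-ratio computation.
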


\begin{remark}
In practical settings most hypotheses are true null and therefore $\pi_0>0$.
\end{remark}

\section{Empirical Results}
\begin{table}
\caption{Recovery of the fMC result.}
\label{tab:simu_recovery}
\vskip 0.15in
\begin{center}
\begin{small}
\begin{tabular}{lcccr}
\toprule
Failure & Avg. MC samples & Prop. of  \\
prob. $\delta$ & per hypothesis ($\pm$std) & success recovery  \\
\midrule
0.001  & 1128$\pm$73 & 100$\%$ \\
0.01   & 1033$\pm$72 & 100$\%$ \\
0.1    & 930$\pm$70 & 100$\%$ \\
\bottomrule
\end{tabular}
\end{small}
\end{center}
\vskip -0.1in
\end{table}

\begin{figure}
    \centering
    \centerline{\includegraphics[width=0.4\textwidth]{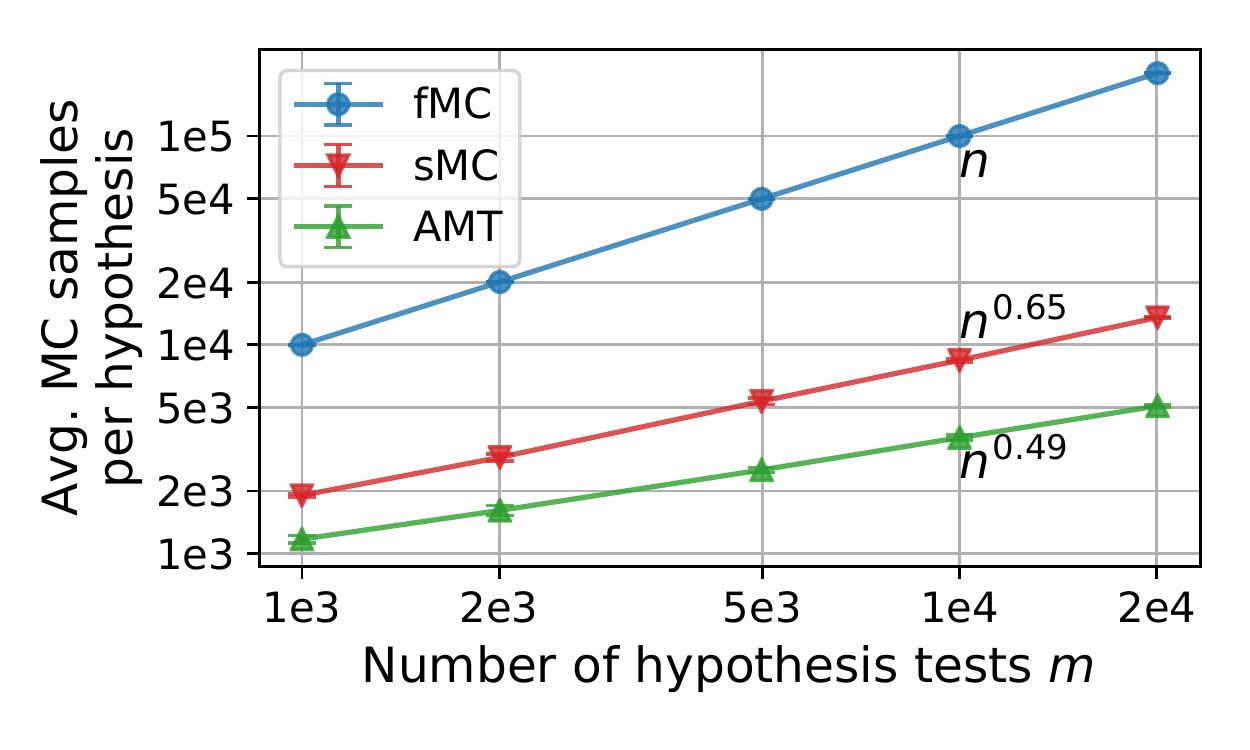}}
    \caption{Average number of MC samples per hypothesis test for different algorithms while increasing the number of hypothesis tests $m$ and letting $n$=10$m$.}
    \label{fig:simu_scale}
\end{figure}

\begin{figure*}[htb!]
    \centerline{
    \subfigure[Nominal FDR $\alpha$.]{\includegraphics[width=0.32\textwidth]{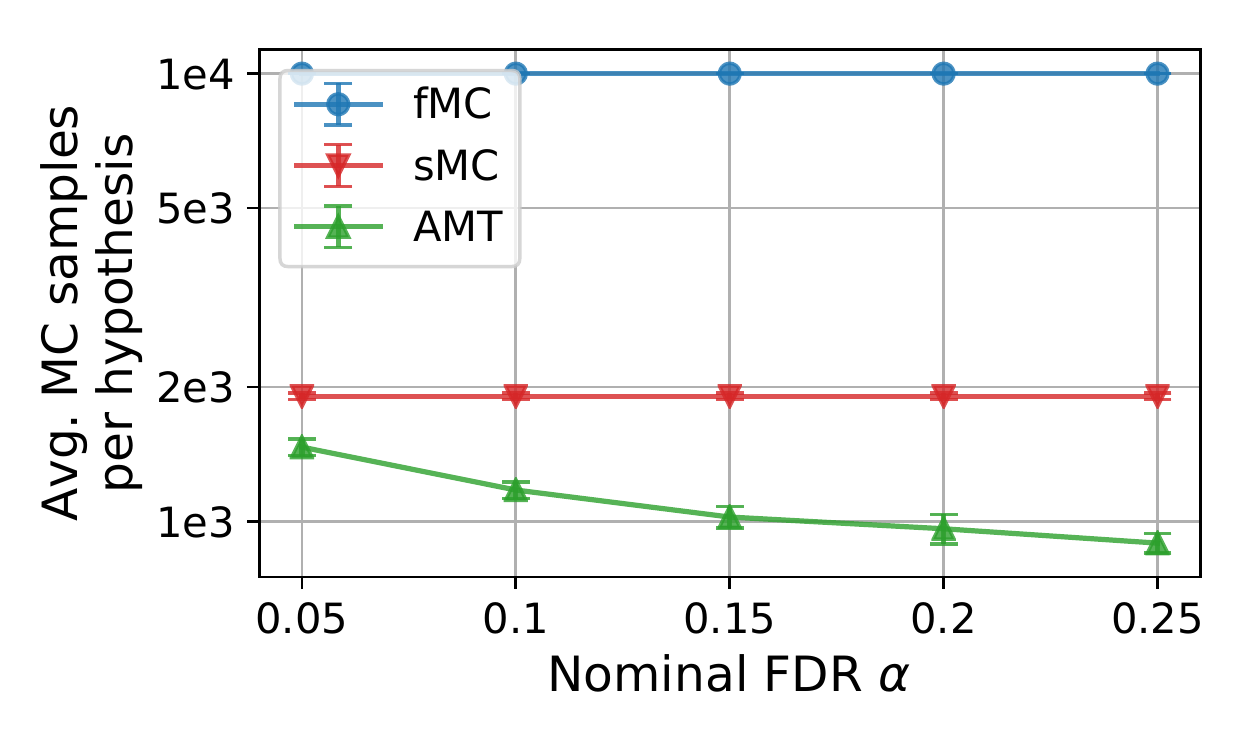}}
    \subfigure[Alternative proportion.]{\includegraphics[width=0.32\textwidth]{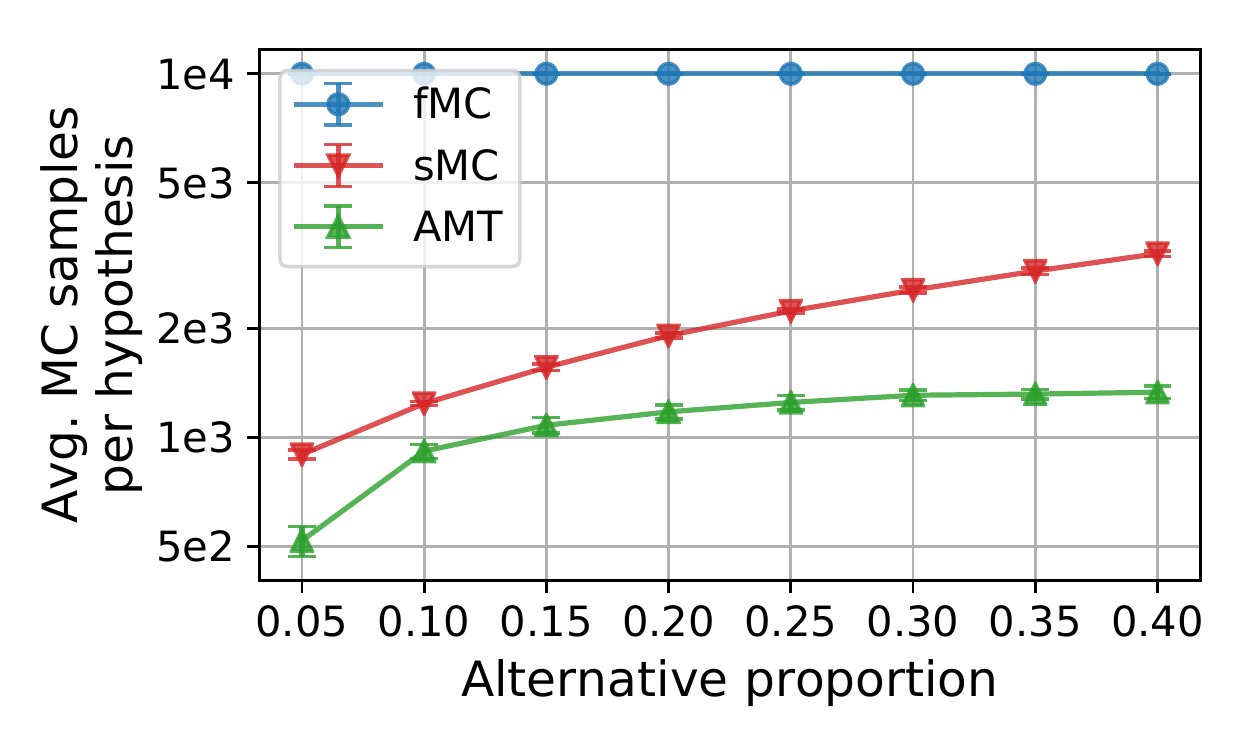}}
    \subfigure[Effect size $\mu$.]{\includegraphics[width=0.32\textwidth]{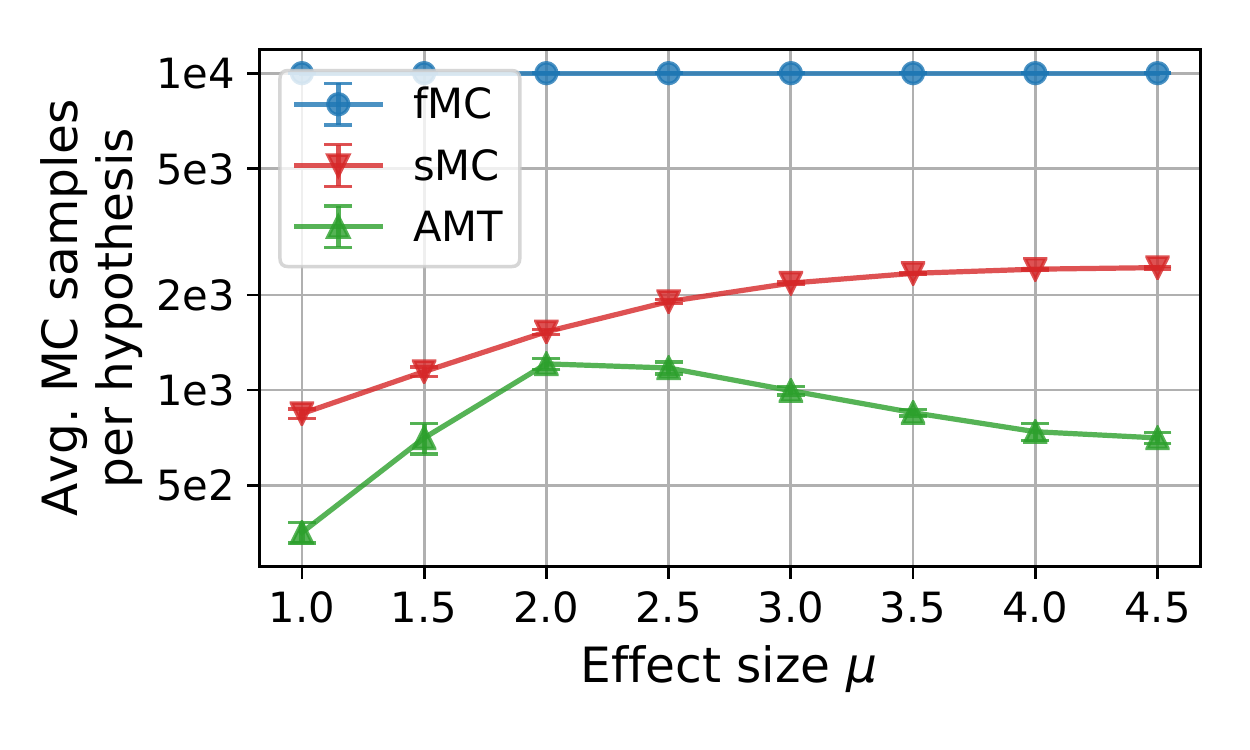}}}
    \caption{Average number of MC samples per hypothesis test for different algorithms while varying different parameters.}
    \label{fig:simu_param}
\end{figure*}
\subsection{Simulated data}
{\bf Setting.} In the default setting, we consider $m$=1000 hypothesis tests, out of which 200 are true alternatives. The p-values are generated from z-scores $Z_i \sim \mathcal{N}(\mu,1)$, where the effect size $\mu$=0 under the null and $\mu$=2.5 under the alternative. The number of fMC samples per hypothesis is set to be $n$=10,000 while the nominal FDR is $\alpha$=0.1. We investigate the performance of \texttt{AMT} by varying different parameters. The performance of sMC is also reported for comparison, where we set its parameter $s$=100 according to the discussion which we postpone to Supp. Sec. \ref{subsec:sMC_param}. We also tested $s$=50, which shows a similar result and is hence omitted.

{\bf Reliability.} We first investigate the reliability of \texttt{AMT} by varying $\delta$, upper bound of the failure probability, where the probability of the CBs is set to be $\frac{\delta}{2 m \log n}$.
For each value of $\delta$ the experiment is repeated 10,000 times. In each repetition, a different set of data is generated and the \texttt{AMT} result is compared to the fMC result while fixing the random seed for MC sampling. As shown in Table \ref{tab:simu_recovery}, \texttt{AMT} recovers the fMC result in all cases while having a 10x gain in sample efficiency. We also found the \texttt{AMT} is rather stable that in practice a larger value of $\delta$ may be used; empirically, \texttt{AMT} starts to fail to recover the fMC result when $\delta$ exceeds $100$. 

{\bf Scaling.} Next we investigate the asymptotic MC sample complexity by increasing $m$ while fixing $n$=$10m$. The experiment is repeated 5 times for each parameter and 95$\%$ confidence intervals are provided. The result is shown in Fig.\ref{fig:simu_scale} where the number of MC samples per hypothesis scales sub-linearly with $n$. A simple linear fitting shows that the empirical scaling for \texttt{AMT} is $n^{0.49}$, validating the $\tilde{O}(\sqrt{n})$ scaling of average MC samples per hypothesis test as derived in Proposition \ref{crly:ub} and Theorem \ref{thrm:lb}. Empirically, \text{sMC} scales sub-linearly but with a higher rate of $n^{0.65}$.

{\bf Varying other parameters.} Finally we vary other parameters including the nominal FDR $\alpha$, alternative proportion, and the effect size $\mu$, where the experiment for each parameter setting is repeated 5 times and 95$\%$ confidence intervals are provided. The results are shown in Fig.\ref{fig:simu_param}. Here, sMC processes each hypothesis separately and computes more MC samples for hypotheses with smaller p-values, as discussed in Sec. \ref{subsec:smc}. However, to obtain the BH result, the true difficulty is quantified by the closeness of the p-values to the BH threshold but zero --- in other words, if a hypothesis test has a very small p-value while the BH threshold is large, it should not be hard to infer that this null hypothesis should be rejected. \texttt{AMT} captures this by adaptively computing more MC samples for null hypotheses with p-values closer to the BH threshold but zero, effectively adapting to different parameter settings and outperforms sMC in terms of the MC sample complexity.

\subsection{GWAS on Parkinson's disease}
We consider a GWAS dataset that aims to identify genetic variants associated with Parkinson's disease \cite{fung2006genome}, which is known to be a complex disease and is likely to be associated with many different SNPs \cite{chang2017meta}; FDR control via BH may yield more new discoveries that are interesting to the community. The dataset comprises 267 cases and 271 controls, each with genotype of 448,001 SNPs that are carefully designed to represent information about several million common genetic variants throughout the genome \cite{international2003international}. 

The phenotype is binary disease/healthy while the genotype is categorical AA/Aa/aa/missing. SNPs with more than $5\%$ missing values are removed to prevent discoveries due to the missing value pattern; this leaves 404,164 SNPs. The MC samples are based on the permutation test using the Pearson's Chi-squared test, where the phenotype is randomly permuted while keeping the same number of cases and controls. This experiment is run on 32 cores (AMD $\text{Opteron}^\text{\texttrademark}$ Processor 6378).


{\bf Small data.} We first compare \texttt{AMT} with fMC on a smaller-scale data that consists of all 23,915 SNPs on chromosome 4 since fMC can not scale to the whole genome.
The number of fMC samples is chosen to be $n$=$250,000$, yielding a total number of $6\times10^9$ MC samples that takes 34 mins to compute with 32 cores (4th row in Table \ref{tab:small_GWAS_time}).
Most fMC p-values are similar to the p-values reported in the original paper \cite{fung2006genome} (Supp. Table \ref{tab:small_GWAS}). The slight difference is because the p-values in the original paper were computed using a different test (Pearson’s Chi-squared test).
FDR level $\alpha$=$0.1$ yields 47 discoveries including all discoveries on chromosome 4 reported in the original paper; $\alpha$=$0.05$ yields 25 discoveries. The \texttt{AMT} result is identical to the fMC result; it takes 123s and an average of 1,241 MC samples per hypothesis, representing a 17x gain in running time and 201x gain in MC sample efficiency. 
The same experiment is performed on other chromosomes (chromosome 1-3), which gives a similar result --- \texttt{AMT} recovers the fMC result in all cases and as shown in Table \ref{tab:small_GWAS_time}, \texttt{AMT} has a gain of 17-39x in running time and 201-314x in MC sample efficiency. See also Supp. Table \ref{supptab:small_GWAS_p_val}.  for the fMC p-values.

{\bf Full data.} We next consider the full dataset with 404,164 SNPs and set the number of fMC samples to be $n$=40,416,400, yielding a total number of $1.6\times 10^{13}$ MC samples. Since there is no internal adaptivity in the fMC procedure, it is reasonable to assume its running time to be proportional to the total number of MC samples, yielding an estimate of 2 months.
It is noted that due to the computational cost, no full-scale permutation analysis has been performed on the dataset. The original paper performed permutation test on a subset of SNPs with theoretical p-values less than 0.05. However, such practice may cause outfitting since the same data is used for both hypothesis selection and testing. 

We run \texttt{AMT} on this dataset with FDR level $\alpha$=0.1, taking 1.1hr to finish and average 13,723 MC samples, representing a gain of 1500x in running time and 3000x in MC sample efficiency. We note that we should expect more computational gain for larger-scale problems since \texttt{AMT} scales linearly with $\sqrt{n}$ while fMC scales linearly with $n$.
In addition, for larger-scale problems the MC samples are computed in larger batches which is more efficient, effectively closing the gap between the gain in actual running time and the gain in MC sample efficiency.

With a FDR level $\alpha$=0.1, \texttt{AMT} made 304 discoveries, including 22/25 SNPs reported in the original paper. Among the three SNPs that are missing, rs355477 ($p^{\text{ub}}$=9.1e-5) and rs355464 ($p^{\text{ub}}$=1.8e-4) are borderline while rs11090762 ($p^{\text{lb}}$=5.9e-2) is likely to be a false positive. 
\texttt{AMT} has a different number of discoveries from the original paper since the original paper reports all SNPs with p-values $<1e$-$4$ as discoveries instead of using the BH procedure. Also, we have not shown that the \texttt{AMT} discoveries are the same as the fMC discoveries here; we validate the correctness of \texttt{AMT} via the aforementioned small data experiment.


{\bf Code availability.} The software is available at 
\\\url{https://github.com/martinjzhang/AMT}

\begin{table}[t]
\caption{Small GWAS data. Average MC samples per hypothesis and running time for fMC and $\texttt{AMT}$. The same experiment is performed on chromosome 1-4 separately. }
\label{tab:small_GWAS_time}
\vskip 0.15in
\begin{center}
\begin{small}
\begin{tabular}{c|cc|cc}
\toprule
\multicolumn{1}{c}{Chromosome} & \multicolumn{2}{|c}{Avg. MC samples} & \multicolumn{2}{|c}{Running time (s)} \\
\cline{2-5}
(\# of SNPs) & fMC & \texttt{AMT} & fMC & \texttt{AMT} \\
\midrule
1 (31,164) & 250,000 & 874 (286x) & 3,148 & 100(31x)\\
2 (32,356) & 250,000 & 797 (314x) & 3,505 & 90 (39x)\\
3 (27,386) & 250,000 & 964 (259x) & 2,505 & 89 (28x)\\
4 (23,915) & 250,000 & 1,241 (201x) & 2,031 & 123 (17x)\\
\bottomrule
\end{tabular}
\end{small}
\end{center}
\vskip -0.1in
\end{table}

\clearpage
\section*{Acknowledgements}
We would like to thank Trevor Hastie, Jennifer Listgarten, Hantian Zhang, Vivek Bagaria, Eugene Katsevich, Govinda Kamath, Tavor Baharav, and Wesley Tansey for helpful discussions and suggestions. 
MZ is partially supported by Stanford Graduate Fellowship. JZ is supported by the Chan-Zuckerberg Initiative, NSF Grant CRII 1657155, NSF Grant AF 1763191, and NIH CEGS. MZ and DT are partially supported by the Stanford Data Science Initiative, NSF Grant CCF 0939370 (Center for Science of Information), NSF Grant CIF 1563098, and NIH Grant R01HG008164.
\bibliography{example_paper}
\bibliographystyle{icml2019}





\newpage

\clearpage
\begin{center}
\textbf{\large Supplemental Materials }
\end{center}
\setcounter{section}{0}
\setcounter{table}{0}
The supplementary material is organized as follows. First we provide additional empirical results and discussions in Supp. Section \ref{sec:supp_result} and Supp. Section \ref{sec:supp_discussions} respectively. Next we present the technical proofs. Specifically, the correctness result Theorem \ref{thrm:recovery} is proved in Supp. Section \ref{suppsec:thrm_recovery}. The instance-wise upper bound Theorem \ref{thrm:instance_ub} is proved in Supp. Section \ref{suppsec:thrm_instance_ub} while the $\tilde{O}(\sqrt{n}m)$ upper bound Proposition \ref{crly:ub} is proved in Supp. Section \ref{suppsec:crly_ub}. The lower bound Theorem \ref{thrm:lb} is proved in Supp. Section \ref{suppsec:thrm_lb}. Finally, the auxiliary lemmas are in Supp. Section \ref{sec:lm}.


\section{Additional Results \label{sec:supp_result}}

\begin{table}[htb!]
\caption{Small GWAS on chromosome 4. }
\label{tab:small_GWAS}
\vskip 0.15in
\begin{center}
\begin{small}
\begin{tabular}{lllll}
\toprule
dbSNP ID & Original & fMC & Rej. at & Rej. at \\
 & p-value & p-value & $\alpha$=0.1 & $\alpha$=0.05 \\
\midrule
rs2242330 & 1.7e-6 & 8.0e-6 & $\surd$ & $\surd$\\
rs6826751 & 2.1e-6 & 1.6e-5 & $\surd$ & $\surd$\\
rs4862792 & 3.5e-5 & 4.0e-6 & $\surd$ & $\surd$\\
rs3775866 & 4.6e-5 & 3.6e-5 & $\surd$ & $\surd$\\
rs355477 & 7.9e-5 & 8.0e-5 & $\surd$ & $\times$\\
rs355461 & 8.3e-5 & 8.0e-5 & $\surd$ & $\times$\\
rs355506 & 8.3e-5 & 8.0e-6 & $\surd$ & $\times$\\
rs355464 & 8.9e-5 & 1.3e-4 & $\surd$ & $\times$\\
rs1497430 & 9.7e-5 & 5.2e-5 & $\surd$ & $\surd$\\
rs11946612 & 9.7e-5 & 4.8e-5 & $\surd$ & $\surd$\\
\bottomrule
\end{tabular}
\end{small}
\end{center}
\vskip -0.1in
\end{table}

\begin{table}[htb!]
\caption{Small GWAS on chromosome 1-3 (There is no discovery reported on chromosomes 2-3 from the orignal paper). }
\label{supptab:small_GWAS_p_val}
\vskip 0.15in
\begin{center}
\begin{small}
\begin{tabular}{lcllc}
\toprule
dbSNP ID & Chromosome & Original & fMC & Rej. at \\
 & & p-value & p-value & $\alpha$=0.1 \\
\midrule
rs988421 & 1 & 4.9e-5 & 3.6e-5 & $\surd$\\
rs1887279 & 1 & 5.7e-5 & 4.4e-5 & $\surd$\\
rs2986574 & 1 & 6.3e-5 & 4.4e-5 & $\surd$\\
rs3010040 & 1 & 8.0e-5 & 6.0e-5 & $\surd$\\
rs2296713 & 1 & 8.0e-5 & 6.0e-5 & $\surd$\\
\bottomrule
\end{tabular}
\end{small}
\end{center}
\vskip -0.1in
\end{table}

\section{Additional Discussions \label{sec:supp_discussions}}
\subsection{Choosing the parameter for sMC\label{subsec:sMC_param}}
For sMC the parameter $s$ need to be chosen \textit{a priori}. A back-of-the-envelope calculation shows that for a hypothesis test with the ideal p-value $p^{\infty}$, the sMC p-value is around $p^{\infty} \pm \frac{p^{\infty}}{\sqrt{s}}$ while the fMC p-value is around $p^{\infty} \pm \sqrt{\frac{p^{\infty}}{n}}$. Suppose the BH threshold on the ideal p-values is $\tau^{\infty}$. Since it is desirable for the BH result on the MC p-values (sMC, fMC) to be close to the BH result on the ideal p-values, the accuracy of the MC p-values with corresponding ideal p-values close to $\tau^{\infty}$ can be thought of as the accuracy of the entire multiple testing problem. Matching such accuracy for sMC and fMC gives that $s = \tau^\infty n = \frac{r^\infty}{m} \alpha n$. When $n$=$10m$ and $\alpha$=$0.1$, we have that $s$=$r^\infty$. That is, $s$ should be at least $100$ if there are more than $100$ discoveries on the ideal p-values. However, since we do not know $r^{\infty}$ before running the experiment, a larger value is preferred. It is noted that values $s$=30-120 are recommended in a recent work \cite{thulin2014cost}.

\subsection{Comparison to bandit FDR\label{subsec:bandit_fdr}}
In the bandit FDR setting \cite{jamieson2018bandit}, each arm has a parameter $\mu_i$ with $\mu_i = \mu_0$ for null arms and $\mu_i>\mu_0+\Delta$ for alternative arms, for some $\mu_0$ and $\Delta>0$ given before the experiment. For arm $i$, i.i.d. observations are available that are bounded and have expected value $\mu_i$. The goal is to select a subset of arms and the selected set should control FDR while achieving a certain level of power. 

Both bandit FDR and \texttt{AMT} aim to select a subset of ``good arms'' as defined by comparing the arm parameters to a threshold. In bandit FDR this threshold is given as $\mu_0$. In \texttt{AMT}, however, this is the BH threshold that is not known ahead of time and needs to be learned from the observed data. The two frameworks also differ in the error criterion. Bandit FDR considers FDR and power for the selected set, a novel criterion in MAB literature. \texttt{AMT}, on the other hand, adopts the traditional PAC-learning criterion of recovering the fMC discoveries with high probability. These distinctions lead to different algorithms: bandit FDR uses an algorithm similar to thresholding MAB \cite{locatelli2016optimal} but with carefully designed confidence bounds to control FDR; \texttt{AMT} devises a new LUCB (lower and upper confidence bound) algorithm that adaptively estimates two things simultaneously: the BH threshold and how each arm compares to the threshold. 

\subsection{Future works}
We have shown that \texttt{AMT} improves the computational efficiency of the fMC workflow, i.e., applying BH on the fMC p-values. A direct extension is to the workflow of applying the Storey-BH procedure \cite{storey2004strong} on the fMC p-values. In addition, in many cases, especially in genetic research, additional covariate information is available for each null hypothesis, e.g., functional annotations of the SNPs in GWAS, where a covariate-dependent rejection threshold can be used to increase testing power \cite{xia2017neuralfdr,zhang2018adafdr}. Extending \texttt{AMT} to such cases would allow both efficient computation of MC p-values and increased power via covariate-adaptive thresholding. Last but not least, MC sampling is an important building block in some modern multiple testing approaches like the model-X knockoff \cite{candes2018panning} or the conditional permutation test \cite{berrett2018conditional}, where ideas in the present paper may be used to improve the computational efficiency. 

\section{Proof of Theorem \ref{thrm:recovery}\label{suppsec:thrm_recovery}}
\begin{proof} (Proof of Theorem \ref{thrm:recovery})
To show \eqref{eq:recovery}, it suffices to show that conditional on any set of fMC p-values $\{P_i^{\text{fMC}}\} = \{p_i\}$, 
\begin{align}\label{eq:recovery_pf1}
    \P \left(\mathcal{R}^{\mathrm{AMT}} = \mathcal{R}^{\mathrm{fMC}} \Big\vert \{P_i^{\text{fMC}}\} = \{p_i\} \right) \geq 1-\delta. 
\end{align}
Let $\mathcal{E}$ denote the event that all CBs hold. Since the number of CBs is at most $2mL$ and each of them holds with probability at least $1-\frac{\delta}{2mL}$ conditional on the fMC p-values, by union bound,
\begin{align*}
    \P\left(\mathcal{E} \Big\vert \{P_i^{\text{fMC}}\} = \{p_i\}\right) \geq 1 - \delta.
\end{align*}

Next we show that $\mathcal{E}$ implies $\mathcal{R}^{\texttt{AMT}}=\mathcal{R}^{\mathrm{fMC}}$, which further gives \eqref{eq:recovery_pf1}. Let $T$ be the total number of rounds, which is finite since at most $mn$ MC samples will be computed. For any round $t$, let ``$(t)$'' represent the corresponding values before the MC sampling of the round, e.g., $\hat{r}(t)$, $\hat{\tau}(t)$, $\mathcal{C}_\text{g}(t)$, $\mathcal{C}_\text{l}(t)$, $\mathcal{U}(t)$. Also, let $(T+1)$ represent the values at termination. 
For any $t\in[T+1]$, 
\begin{enumerate}
    \item if $\hat{r}(t) > r^*$, by \eqref{eq:fMC_critical_rank} more than $m - \hat{r}(t)$ fMC p-values are greater than $\hat{\tau}(t)$ whereas $\vert \mathcal{C}_\text{g}(t) \vert = m - \hat{r}(t)$. Thus, there is at least one hypothesis that has fMC p-value greater than $\hat{\tau}(t)$ and is not in $\mathcal{C}_\text{g}(t)$. On $\mathcal{E}$, it cannot be in $\mathcal{C}_{\textrm{l}}(t)$. Hence, it is in $\mathcal{U}(t)$, giving that $\mathcal{U}(t) \neq \emptyset$. Thus, $t\neq T+1$ and the algorithm will not terminate. 
    \item if $\hat{r}(t) = r^*$, there are $m-r^*$ hypotheses in $\mathcal{C}_\text{g}(t)$ corresponding to those with fMC p-values greater than $\tau^*$. Other hypotheses all have fMC p-values less than $\tau^*$ and hence, on $\mathcal{E}$, will not enter $\mathcal{C}_\text{g}$ after further sampling. Therefore, $\hat{r}(t)$ will not further decrease. 
\end{enumerate}
Therefore, $\hat{r}(T+1) =  r^*$. Since $\mathcal{U}(T+1)=\emptyset$, on $\mathcal{E}$, $\mathcal{C}_\text{l}(T+1)$ contains all hypotheses with fMC p-values less than $\tau^*$, i.e., $\mathcal{C}_\text{l}(T+1) = \mathcal{R}^{\mathrm{fMC}}$. Hence, we have shown \eqref{eq:recovery_pf1}.

Next we prove FDR control. Let $\text{FDP}(\mathcal{R}^{\mathrm{fMC}})$ and $\text{FDR}(\mathcal{R}^{\mathrm{fMC}})$ denote the false discovery proportion and FDR of the set $\mathcal{R}^{\mathrm{fMC}}$, respectively. It is noted that $\text{FDR}(\mathcal{R}^{\mathrm{fMC}}) = \E[\text{FDP}(\mathcal{R}^{\mathrm{fMC}})]$. Let $\mathcal{E}_1$ denote the event that $\mathcal{R}^{\mathrm{AMT}} = \mathcal{R}^{\mathrm{fMC}}$ and $\mathcal{E}_1^c$ be the complement of $\mathcal{E}_1$. Then $\P(\mathcal{E}_1^c) \leq \delta$ due to \eqref{eq:recovery} that we have just proved. For \texttt{AMT}, 
\begin{align}
    & \text{FDR}(\mathcal{R}^{\texttt{AMT}}) = \E[\text{FDP}(\mathcal{R}^{\texttt{AMT}})] \\
    & = \E[\text{FDP}(\mathcal{R}^{\texttt{AMT}}) \vert \mathcal{E}_1] \P(\mathcal{E}_1) + \E[\text{FDP}(\mathcal{R}^{\texttt{AMT}}) \vert \mathcal{E}_1^c] \P(\mathcal{E}_1^c). \label{eq:recovery_pf2}
\end{align}
The first term of \eqref{eq:recovery_pf2}
\begin{align*}
    &\E[\text{FDP}(\mathcal{R}^{\texttt{AMT}}) \vert \mathcal{E}_1] \P(\mathcal{E}_1) = \E[\text{FDP}(\mathcal{R}^{\mathrm{fMC}}) \vert \mathcal{E}_1] \P(\mathcal{E}_1) \\&\leq \E[\text{FDP}(\mathcal{R}^{\mathrm{fMC}})] = \text{FDR}(\mathcal{R}^{\mathrm{fMC}}) \leq \pi_0 \alpha,
\end{align*}
where the last inequality is because the fMC p-values are stochastically greater than the uniform distribution under the null hypothesis, and hence, applying BH on them controls FDR at level $\pi_0 \alpha$.

The second term of \eqref{eq:recovery_pf2} is upper bounded by $\delta$ as FDP is always no greater than $1$.
Therefore, 
\begin{align*}
    \text{FDR}(\mathcal{R}^{\texttt{AMT}}) \leq \pi_0\alpha + \delta.
\end{align*}
\end{proof}

\section{Proof of Theorem \ref{thrm:instance_ub}\label{suppsec:thrm_instance_ub}}
\begin{proof}(Proof of Theorem \ref{thrm:instance_ub})
The entire analysis is conditional on the fMC p-values $\{P_i^{\text{fMC}}\} = \{p_i\}$.
Without loss of generality assume $p_1\leq p_2\leq \cdots \leq p_m$.
Let $T$ be the total number of rounds, which is finite since at most $mn$ MC samples will be computed. For any round $t$, let ``$(t)$'' represent the corresponding values before the MC sampling of the round. Note that ``$(T+1)$'' represent the values at termination. The quantities useful to the analysis include
\begin{enumerate}
    \item $N_i(t)$: number of MC samples for arm $i$.
    \item $p^{\text{lb}}_i(t), p^{\text{ub}}_i(t)$: lower and upper CBs for arm $i$.
    \item Empirical mean $\hat{p}_i(t) = \frac{1}{N_i(t)} \left(1 \vee \sum_{j=1}^{N_i(t)} B_{i,j} \right)$.
    \item $\mathcal{C}_\text{g}(t)$, $\mathcal{C}_\text{l}(t)$, $\mathcal{U}(t)$: hypothesis sets as defined in \eqref{eq:h_set}. 
    \item $\hat{r}(t), \hat{\tau}(t)$: critical rank estimate and the corresponding BH threshold estimate.
\end{enumerate}

Let $\mathcal{E}$ denote the event that all CBs hold. Since the number of CBs is at most $2mL$ and each of them holds with probability at least $1-\frac{\delta}{2mL}$ conditional on the fMC p-values, by union bound,
\begin{align*}
    \P\left(\mathcal{E} \Big\vert \{P_i^{\text{fMC}}\} = \{p_i\}\right) \geq 1 - \delta.
\end{align*}

Conditional on $\mathcal{E}$, when the algorithm terminates, $\mathcal{U}(T+1) = \emptyset$.
There are $m-r^*$ hypotheses in $\mathcal{C}_{\text{g}}(T+1)$ and $r^*$ hypotheses in $\mathcal{C}_{\text{l}} (T+1)$. 
We next upper the number of MC samples for hypotheses in these two sets separately.

{\bf Step 1. Hypotheses in $\mathcal{C}_{\text{g}}(T+1)$.}
On $\mathcal{E}$, there are $m-r^*$ hypotheses in $\mathcal{C}_{\text{g}}(T+1)$. For any $i \in [m-r^*]$, let $g_i$ be the $i$th hypothesis entering $\mathcal{C}_{\text{g}}$. For two hypotheses entering $\mathcal{C}_{\text{g}}$ in the same round, the one is considered entering earlier if it has a larger upper CB $p^{\text{ub}}$ before the MC sampling in the entering round.

Consider any $g_i$ that enters after MC sampling in round $t_i$ and  
let $g_j$ be the first hypothesis entering $\mathcal{C}_{\text{g}}$ in the same round. 
Here, we note that $t_i = t_j$ and the number of MC samples $N_{g_i}(T+1) = N_{g_j}(T+1)$. 
In addition, 
\begin{align}
    N_{g_j}(T+1) = N_{g_j}(t_j+1) \leq (1+\gamma) N_{g_j}(t_j),
\end{align}
since the batch sizes is a geometric sequence with ratio $\gamma$. 
Now we focus on $N_{g_j}(t_j)$.

Since $g_j$ is sampled in round $t_j$, we have that $g_j \notin \mathcal{C}_\text{g}(t_j)$.
This indicates that in round $t_j$, the lower CB of $g_j$ should be no greater than the estimated threshold $\hat{\tau}(t_j)$ before MC sampling;
otherwise $g_j$ would have entered $\mathcal{C}_\text{g}$ before round $t_j$. Hence,
\begin{align} \label{eq:thrm_bh_pf1}
    p^{\text{lb}}_{g_j} (t_j) \leq \hat{\tau}(t_j).
\end{align}
Also, being the first to enter $\mathcal{C}_\text{g}$ in round $t_j$, its upper CB is the largest among all elements in $\mathcal{U}(t_j)$, i.e., 
\begin{align} \label{eq:thrm_bh_pf2}
    p^{\text{ub}}_{g_j} (t_j) = \max_{k \in \mathcal{U}(t_j)} p^{\text{ub}}_{k} (t_j).
\end{align}
Subtracting \eqref{eq:thrm_bh_pf1} from \eqref{eq:thrm_bh_pf2} to have the width of the confidence interval 
\begin{equation} \label{eq:thrm_bh_pf3}
    \begin{split}
        p^{\text{ub}}_{g_j}(t_j) - p^{\text{lb}}_{g_j} (t_j) & \geq \max_{k \in \mathcal{U}(t_j)} p^{\text{ub}}_{k} (t_j) - \hat{\tau}(t_j) \\
        & \geq \max_{k \in \mathcal{U}(t_j)} p_{k} - \hat{\tau}(t_j),
    \end{split}
\end{equation}
where the last inequality is conditional on $\mathcal{E}$.
Since $\vert \mathcal{C}_\text{g}(t_j) \vert = j-1$, we have that $\max_{k \in \mathcal{U}(t_j)} p_{k} \geq p_{m-j+1}$.
Therefore \eqref{eq:thrm_bh_pf3} can be further written as 
\begin{align} \label{eq:thrm_bh_pf4}
    p^{\text{ub}}_{g_j}(t_j) - p^{\text{lb}}_{g_j} (t_j) \geq p_{m-j+1} - \hat{\tau}(t_j) = \Delta_{m-j+1}.
\end{align}

Since the CBs satisfy \eqref{eq:CB_form}, equations \eqref{eq:thrm_bh_pf1} and \eqref{eq:thrm_bh_pf4} can be rewritten as 
\begin{equation}
    \begin{split}
    & \hat{p}_{g_j} (t_j) - \sqrt{\frac{c\left(\frac{\delta}{2 mL}\right) \hat{p}_{g_j} (t_j) }{T_{g_j}(t_j)}} \leq \hat{\tau}(t_j),\\
    & 2 \sqrt{\frac{c\left(\frac{\delta}{2 mL}\right)\hat{p}_{g_j} (t_j) }{T_{g_j}(t_j)}} \geq \Delta_{m-j+1}.
    \end{split}
\end{equation}

Note that $\hat{\tau}(t_j) = \frac{m-j+1}{m}\alpha$. By Lemma \ref{lm:anci_bh},
\begin{align}\label{eq:thrm_bh_pf5}
    N_{g_j}(t_j) & \leq \frac{4 c\left(\frac{\delta}{2mL}\right) \left(\frac{m-j+1}{m} \alpha + \frac{\Delta_{m-j+1}}{2} \right) }{\Delta_{m-j+1}^2} \\
    & \leq \frac{4 c\left(\frac{\delta}{2mL}\right) p_{m-j+1} }{\Delta_{m-j+1}^2}.
\end{align}
Since $i\geq j$, we have that $m-j+1 \geq m-i+1$. Therefore. 
\begin{equation}\label{eq:thrm_bh_pf6}
    \begin{split}
    \E[N_{g_i}(T+1) \vert \mathcal{E} ] &\leq (1+\gamma) \E[N_{g_i}(t_i) \vert \mathcal{E} ]\\
    & \leq (1+\gamma) \frac{4 c\left(\frac{\delta}{2mL}\right) p_{m-j+1} }{\Delta_{m-j+1}^2} \\
    & \leq \max_{k\geq m-i+1} \frac{4(1+\gamma) c\left(\frac{\delta}{2mL}\right) p_k }{\Delta_{k}^2}.
    \end{split}
\end{equation}

{\bf Step 2. Hypotheses in $\mathcal{C}_{\text{l}}(T+1)$.}
On $\mathcal{E}$, $\mathcal{C}_\text{l}(T+1) = \mathcal{R}^{\mathrm{fMC}}$ and $\hat{\tau}(T+1) = \tau^*$. 
Consider any hypothesis $i \in \mathcal{C}_\text{l}(T+1)$ whose fMC p-value is $p_i \leq \tau^*$.
It will be sampled until its upper CB is no greater than $\tau^*$. Let its last sample round be $t_i$.
Then,
\begin{align}\label{eq:thrm_bh_pf7}
     p^{\text{ub}}_{g_i}(t_i) > \tau^*,~~~~ p^{\text{ub}}_{g_i}(t_i+1)\leq \tau^*, ~~~~ p^{\text{lb}}_{g_i}(t_i) \leq p_i.
\end{align}
Subtracting the third term from the first term yields 
\begin{align}\label{eq:thrm_bh_pf8}
    p^{\text{ub}}_{g_i}(t_i) - p^{\text{lb}}_{g_i}(t_i) > \Delta_i.
\end{align}

Since the CBs satisfy \eqref{eq:CB_form}, the second term in \eqref{eq:thrm_bh_pf7} along with \eqref{eq:thrm_bh_pf8} can be rewritten as 
\begin{equation}\label{eq:thrm_bh_pf9}
    \begin{split}   
    & \hat{p}_{i} (t_i+1) + \sqrt{\frac{c\left(\frac{\delta}{2 mL}\right) \hat{p}_{i} (t_i+1) }{N_{i}(t_i+1)}} \leq \tau^*,\\
    & 2 \sqrt{\frac{c\left(\frac{\delta}{2 mL}\right)\hat{p}_i (t_i) }{N_i(t_i)}} > \Delta_i.
    \end{split}
\end{equation}
Note that $N_{i}(t_i+1) \leq (1+\gamma) N_{i}(t_i)$ and $\hat{p}_{i} (t_i+1)\geq \frac{1}{1+\gamma} \hat{p}_{i} (t_i)$,
\eqref{eq:thrm_bh_pf9} can be further written as 
\begin{align}\label{eq:thrm_bh_pf10}
    \begin{split}
    & \hat{p}_{i} (t_i) + \sqrt{\frac{c\left(\frac{\delta}{2 mL}\right) \hat{p}_{i} (t_i) }{N_{i}(t_i)}} \leq (1+\gamma) \tau^* \\
    & 2 \sqrt{\frac{c\left(\frac{\delta}{2 mL}\right)\hat{p}_i (t_i) }{N_i(t_i)}} > \Delta_i.
    \end{split}
\end{align}

Furthermore, 
\begin{align}
    N_{i}(t_i) \leq \frac{4(1+\gamma) c\left(\frac{\delta}{2 mL}\right)  \tau^*}{\Delta_i^2}.
\end{align}
and the number of MC samples for hypothesis $i$
\begin{equation}
    \begin{split}
    \E[N_{i}(T+1) \vert \mathcal{E} ] &\leq (1+\gamma) \E[N_{i}(t_i) \vert \mathcal{E} ]\\
    & \leq \frac{4(1+\gamma)^2 c\left(\frac{\delta}{2 mL}\right)  \tau^*}{\Delta_i^2}.
    \end{split}
\end{equation}

{\bf Step 3. Combine the result.}
Finally, noting that a hypothesis can be at most sampled $n$ times, the total expected MC samples 
\begin{align}
    \E[N] & \leq \E\left[ \sum_{i=1}^m N_i(T+1) \Big\vert \mathcal{E}\right] + \delta mn \\
    & \leq \sum_{i=1}^{r^*} n \wedge \left( \frac{4(1+\gamma)^2 c\left(\frac{\delta}{2 mL}\right)  \tau^*}{\Delta_i^2}\right) \\
    & \sum_{i=r^*+1}^m n \wedge \left( \max_{k\geq i} \frac{4(1+\gamma) c\left(\frac{\delta}{2mL}\right) p_k }{\Delta_{k}^2} \right)  + \delta mn.
\end{align}
\end{proof}

\section{Proof of Proposition \ref{crly:ub}\label{suppsec:crly_ub}}
\begin{proof}(Proof of Proposition \ref{crly:ub})
First let us consider the case where $f(p)$ is continuous and monotonically decreasing. The case where $f(p)=1$ is easy and is dealt with at the end.

{\bf Step 0. Notations.}
Since this proof is an asymptotic analysis, we use subscript ``${n,m}$'' to denote the quantities for the fMC p-values with $n$ MC samples and $m$ hypotheses. We are interested in the regime where $m\rightarrow \infty$ while $n = \Omega(m)$.

For an instance with $m$ hypotheses and $n$ MC samples for each hypothesis, let $\tilde{\tau}_{n,m}$ be the BH threshold and $\tilde{F}_{n,m}$ be the empirical distribution of the fMC p-values $\tilde{F}_{n,m}(x) = \frac{1}{m}\sum_{i=1}^m \ind\{P_i^{\text{fMC}}\leq x\}$. Also let $\tilde{f}_{n,m}$ be the probability mass function $\tilde{f}_{n,m}(x) = \frac{1}{m}\sum_{i=1}^m \ind\{P_i^{\text{fMC}}= x\}$.

For the distribution of the ideal p-values $F$, define $g(x) = x - F(x) \alpha$ and let $\tau^* = \sup_{[0,1]}\{\tau:g(\tau) \leq 0\}$. 
$\tau^*$ is actually the BH threshold in the limiting case, as will be shown in Step 2 below. There are a few properties we would like to point out. By definition $g(\tau^*) = 0$. As a result, $F(\tau^*) = \frac{\tau^*}{\alpha}$. 
Since $f(p)$ is monotonically decreasing, $f(\tau^*) < \frac{F(\tau^*)}{\tau^*} = \frac{1}{\alpha}$. Furthermore, $g'(\tau^*) = 1 - f(\tau^*) \alpha > 0$.

{\bf Step 1. $\tilde{F}_{n,m}$ converges uniformly to $F$.}
Let $F_n$ be the distribution of the fMC p-values with $n$ MC samples. Then $F_n$ converges uniformly to $F$. Furthermore, by Glivenko-Cantelli theorem $\tilde{F}_{n,m}$ converges uniformly to $F_n$. Therefore, $\tilde{F}_{n,m}$ converges uniformly to $F$.

{\bf Step 2. $\tilde{\tau}_{n,m}$ converges in probability to $\tau^*$.} 
For an instance with $m$ hypotheses and $n$ MC samples for each hypothesis, let $\tilde{g}_{n,m}(x) = x - \tilde{F}_{n,m}(x) \alpha$. Then $\tilde{\tau}_{n,m} = \sup_{[0,1]}\{\tau:\tilde{g}_{n,m}(\tau) \leq 0\}$. Since $\tilde{F}_{n,m}$ converges uniformly to $F$, $\tilde{g}_{n,m}$ converges uniformly to $g$. 
Since $g'(\tau^*) >0$ and is continuous at $\tau^*$, $\exists \epsilon_0>0$ such that $g(x)$ is monotonically increasing on $[\tau^*-\epsilon_0, \tau^*+\epsilon_0]$. Since $\tilde{g}_{n,m}$ converges uniformly to $g$ on this interval, for any $0<\epsilon'<\epsilon$, $\P(\vert \tilde{\tau}_{n,m} - \tau^* \vert > \epsilon') \rightarrow 0$. Thus, $\tilde{\tau}_{n,m} \overset{p}{\rightarrow} \tau^*$.

{\bf Step 3. Upper bound $\E  [N]$.} 
Let $\delta = \frac{1}{mn}$ and let $\tilde{c}$ denote any log factor (in both $m$ and $n$) in general.
Then for the fMC p-values with $n$ MC samples and $m$ hypotheses, by Theorem \ref{thrm:recovery}, and omitting additive constants, 
\begin{equation}\label{eq:ub_pf1}
\begin{split}
    & \E  [N] \leq  \tilde{c} \E \left[ \sum_{i=1}^{r^*} n \wedge \frac{\tilde{\tau}_{n,m}}{\Delta_{(i)}^2}
     + \sum_{i=r^*+1}^m n \wedge \max_{k\geq i} \frac{ P_{(k)}^{\text{fMC}} }{\Delta_{(k)}^2} \right] \\
    & \leq \tilde{c} \E \left[ \sum_{i=1}^{r^*} n \wedge \frac{1}{\Delta_{(i)}^2}
     + \sum_{i=r^*+1}^m n \wedge \max_{k\geq i} \frac{1}{\Delta_{(k)}^2} \right].
\end{split}
\end{equation}
Notice that $\tilde{F}_{n,m}(P_{(k)}^{\text{fMC}}) \geq \frac{k}{m}$ where the inequality is because there might be several hypotheses with the same value.  Therefore for any $P_{(k)}^{\text{fMC}}>\tilde{\tau}_{n,m}$, 
\begin{align*}
    & \frac{1}{\Delta_{(k)}^2} = \frac{1}{\left(P_{(k)}^{\text{fMC}} - \frac{k}{m}\alpha\right)^2} \\
    & \leq \frac{1}{\left(P_{(k)}^{\text{fMC}} - \tilde{F}_{n,m}(P_{(k)}^{\text{fMC}})\alpha\right)^2}  = \frac{1}{\tilde{g}_{n,m}(P_{(k)}^{\text{fMC}})^2}.
\end{align*}
Hence, summing over all possible values of the empirical distribution of the fMC p-values, i.e., $P^{\text{fMC}} = \frac{1}{n+1}, \frac{2}{n+1}, \cdots, 1$ (note the definition of the fMC p-values in \eqref{eq:MC_p_val}), to further write \eqref{eq:ub_pf1} as  
\begin{equation}\label{eq:ub_pf2}
\begin{split}
    & \E  [N] \leq \\
    & \tilde{c} m \E\left[ \sum_{i=1}^{\lfloor (n+1) \tilde{\tau}_{n,m}\rfloor} \left(n \wedge  \frac{1}{\left(\frac{i}{n+1} - \tilde{\tau}_{n,m}\right)^2} \right)\tilde{f}_{n,m}\left(\frac{i}{n+1}\right) \right.\\
    &  + \left.\sum_{i=\lceil (n+1) \tilde{\tau}_{n,m}\rceil}^{n+1} \left( n \wedge  \max_{k\geq i} \frac{1}{\tilde{g}_{n,m}(\frac{k}{n+1})^2} \right) \tilde{f}_{n,m}\left(\frac{i}{n+1}\right)\right].
\end{split}
\end{equation}

Since $f(x)$ is continuous, $g'(x)$ is also continuous. Recall that $g'(\tau^*)>0$.
Hence, $\exists \epsilon, c_0 > 0$ such that $\forall x \in [\tau^* - \epsilon, 1]$, $g'(x)>c_0$.
Recall that $\tilde{g}_{m,n}$ converges uniformly to $g$ and $\tilde{\tau}_{n,m} \overset{p}{\rightarrow}\tau^*$.
Note that by definition $\tilde{g}_{n,m}(\tilde{\tau}_{n,m})=0$. Therefore, $\exists c_1 > 0$ such that for large enough $n,m$, for any $k \geq \lceil (n+1) \tilde{\tau}_{n,m}\rceil$, 
\begin{align}
    & \tilde{g}_{n,m}\left(\frac{k}{n+1}\right) = \tilde{g}_{n,m}\left(\frac{k}{n+1}\right) - \tilde{g}_{n,m}(\tilde{\tau}_{n,m}) \\
    & \geq c_1 \left(\frac{k}{n+1} - \tilde{\tau}_{n,m}\right).
\end{align}

Hence, \eqref{eq:ub_pf2} can be further rewritten as 
\begin{align*}
    & \E  [N] \leq \\
    & \tilde{c} m \E\left[ \sum_{i=1}^{\lfloor (n+1) \tilde{\tau}_{n,m}\rfloor} \left(n \wedge  \frac{1}{\left(\frac{i}{n+1} - \tilde{\tau}_{n,m}\right)^2} \right)\tilde{f}_{n,m}\left(\frac{i}{n+1}\right) \right.\\
    &  + \left.\sum_{i=\lceil (n+1) \tilde{\tau}_{n,m}\rceil}^{n+1} \left( n \wedge  \frac{1}{c_1^2\left(\frac{i}{n+1} - \tilde{\tau}_{n,m}\right)^2} \right) \tilde{f}_{n,m}\left(\frac{i}{n+1}\right)\right]\\
    & \leq \frac{\tilde{c}}{c_1^2 } m \E\left[ \sum_{i=1}^{n+1} n \wedge \frac{1}{(\frac{i}{n} - \tilde{\tau}_{n,m})^2} \tilde{f}_{n,m}\left(\frac{i}{n+1}\right)\right]\\
    & = \frac{\tilde{c}}{c_1^2 }  m \E\left[ n \wedge \frac{1}{(P_i^{\text{fMC}}- \tilde{\tau}_{n,m})^2}\right].
\end{align*}

Since $F_{n}$ converges uniformly to $F$ and $\tilde{\tau}_{n,m} \overset{p}{\rightarrow} \tau^*$, by Slutsky's theorem and the continuous mapping theorem, the RHS will converge to 
\begin{align}
    \frac{\tilde{c}}{c_1^2 } m \E\left[ n \wedge \frac{1}{(P_i^{\infty} - \tau^*)^2}\right].
\end{align}
Last we evaluation the expectation: 
\begin{equation*}
    \begin{split}
    & \E\left[ n \wedge \frac{1}{(P_i^{\infty} - \tau^*)^2}\right] = \int_0^{\tau^* - \frac{1}{\sqrt{n}}} \frac{1}{(p-\tau^*)^2}dF(p) \\
    & + \int_{\tau^* - \frac{1}{\sqrt{n}}}^{\tau^* + \frac{1}{\sqrt{n}}} n dF(p) + \int_{\tau^* + \frac{1}{\sqrt{n}}}^1 \frac{1}{(p-\tau^*)^2}dF(p).
    \end{split}
\end{equation*}
By noting that $f(\tau^*)< \frac{1}{\alpha}$ and $f(p)$ is monotonically decreasing it is clear that all three terms are $\tilde{O}(\sqrt{n})$, which concludes the proof of this case.  

When $f(p)=1$, the limiting BH threshold $\tau^* = 0$. Furthermore, $g(x)=(1-\alpha)x$ and $g'(x) = 1-\alpha >0$. Therefore, $g(\frac{k}{n+1})\geq (1-\alpha)(\frac{k}{n+1} - \tilde{\tau}_{n,m})$. Then, similarly we have the total number of MC samples 
\begin{align}
    \E[N] \leq \frac{\tilde{c}}{(1-\alpha)^2 }  m \E\left[ n \wedge \frac{1}{(P_i^{\text{fMC}}- \tilde{\tau}_{n,m})^2}\right],
\end{align}
which converges to 
\begin{align}
    \frac{\tilde{c}}{(1-\alpha)^2 }  m \E\left[ n \wedge \frac{1}{(P_i^{\infty})^2}\right]
\end{align}
that is $\tilde{O}(\sqrt{n}m)$.
\end{proof}

\section{Proof of Theorem \ref{thrm:lb}\label{suppsec:thrm_lb}}

\begin{proof}(Proof of Theorem \ref{thrm:lb})
Let $F_n$ be the distribution of the fMC p-values with $n$ MC samples.
By Lemma \ref{lm:instance_lb}, conditional on the fMC p-values $\{P_i^{\text{fMC}}\} = \{p_i\}$, $\exists \delta_0>0$, $c_0>0$, $c_1>0$, s.t. $\forall \delta<\delta_0$, a $\delta$-correct algorithm satisfies 
\begin{align}
    \E\left[N \Big\vert \{P_i^{\text{fMC}}\} = \{p_i\}\right] \geq  c_0 n \sum_{i=1}^m  \ind\{\tau^* < p_i \leq \tau^*+\frac{c_1}{\sqrt{n}}\}.
\end{align}
Taking expectation with respect to the fMC p-values to have 
\begin{align}\label{eq:lb_pf_1}
    \E\left[N \right] \geq c_0 n m \P\left[ \tau^* < P_i^{\text{fMC}} \leq \tau^*+\frac{c_1}{\sqrt{n}}\right].
\end{align}
Since the null fMC p-values follow a uniform distribution, 
\begin{align}\label{eq:lb_pf_2}
    \E\left[N \right] \geq c_0 \pi_0 n m\frac{c_1}{\sqrt{n}} = c_0 c_1 \pi_0 \sqrt{n}m,
\end{align}
which completes the proof. 

\end{proof}

\section{Auxiliary Lemmas\label{sec:lm}}
\begin{lemma}\label{lm:anci_bh}
    For $c>0$, $\hat{p}>0$, $\Delta>0$, $\tau>0$, if 
    \begin{align} \label{eq:lm_anci_bh_2}
        \hat{p} - \sqrt{\frac{c \hat{p}}{n}} \leq \tau,
         ~~~~ 2 \sqrt{\frac{c \hat{p}}{n}} \geq \Delta,
    \end{align}
    then 
    \begin{align}
        n \leq \frac{4c (\tau + \frac{\Delta}{2})}{\Delta^2}.
    \end{align}
\end{lemma}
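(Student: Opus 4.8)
The plan is to eliminate the recurring quantity $\sqrt{c\hat{p}/n}$ by a single substitution and then reduce everything to a one-variable monotonicity argument. First I would set $u \eqdef \sqrt{c\hat{p}/n}$, which is well defined and strictly positive since $c,\hat{p},n>0$. With this notation the two hypotheses \eqref{eq:lm_anci_bh_2} become simply $\hat{p}-u\le\tau$ and $2u\ge\Delta$, i.e.
\begin{align*}
\hat{p}\le \tau+u, \qquad u\ge \tfrac{\Delta}{2}.
\end{align*}
The goal quantity is likewise easy to re-express: from $u^2=c\hat{p}/n$ we get $n=c\hat{p}/u^2$.

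Next I would chain these facts. Substituting the bound $\hat{p}\le\tau+u$ into $n=c\hat{p}/u^2$ gives
\begin{align*}
n=\frac{c\hat{p}}{u^2}\le \frac{c(\tau+u)}{u^2}=\frac{c\tau}{u^2}+\frac{c}{u}.
\end{align*}
Both summands on the right are strictly decreasing in $u$ on $(0,\infty)$, so they are maximized at the smallest admissible value of $u$. Using the lower bound $u\ge\Delta/2$ term by term yields $\frac{c\tau}{u^2}\le \frac{4c\tau}{\Delta^2}$ and $\frac{c}{u}\le\frac{2c}{\Delta}$, whence
\begin{align*}
n\le \frac{4c\tau}{\Delta^2}+\frac{2c}{\Delta}=\frac{4c\tau+2c\Delta}{\Delta^2}=\frac{4c\left(\tau+\frac{\Delta}{2}\right)}{\Delta^2},
\end{align*}
which is exactly the claimed inequality.

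There is no genuine obstacle here: the statement is a short algebraic inequality, and the only point requiring a moment of care is recognizing that the bound $\hat{p}\le\tau+u$ should be used to replace $\hat{p}$ in the numerator of $n=c\hat{p}/u^2$, after which the remaining expression is manifestly decreasing in $u$ so that the constraint $u\ge\Delta/2$ can be applied directly. I would present the argument in exactly the order above, keeping the substitution $u=\sqrt{c\hat{p}/n}$ explicit throughout to make the two hypotheses and the target all transparent in terms of a single variable.
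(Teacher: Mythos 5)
Your proof is correct, and it takes a genuinely different route from the paper's. The paper proceeds by brute-force algebra: it squares the first hypothesis, solves the resulting quadratic to get $\hat{p} \leq \tau + \frac{c}{2n} + \sqrt{\frac{c}{n}\tau + \frac{c^2}{4n^2}}$, sandwiches this against $\hat{p} \geq \frac{\Delta^2}{4c}n$ from the second hypothesis, squares again, and completes a square in $n$ to reach $\left(\frac{\Delta^2}{4c}n - \tau\right)^2 \leq \frac{\Delta^2}{4}$. Your substitution $u = \sqrt{c\hat{p}/n}$ collapses all of this into a single monotonicity argument: $n = c\hat{p}/u^2 \leq c(\tau+u)/u^2 = \frac{c\tau}{u^2} + \frac{c}{u}$, and both summands are decreasing in $u$, so $u \geq \Delta/2$ gives the bound with the same constant. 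Your version buys two things: it is shorter, and it is cleaner on rigor --- the paper's route involves squaring inequalities of the form $A \leq B$, which strictly speaking requires checking $A \geq 0$ (the paper glosses over this when squaring $\frac{\Delta^2}{4c}n - \tau - \frac{c}{2n} \leq \sqrt{\frac{c}{n}\tau + \frac{c^2}{4n^2}}$), whereas every step in your argument is a sign-safe substitution of a bound into a manifestly positive, monotone expression. The paper's approach, by contrast, needs no change of variables and stays entirely in the original quantities, but that is a cosmetic rather than substantive advantage here.
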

\begin{proof}(Proof of Lemma \ref{lm:anci_bh})
    Rearranging the first inequality in \eqref{eq:lm_anci_bh_2} and taking square of both sides to have
    \begin{align*}
        \hat{p}^2 - 2\tau \hat{p} + \tau^2  \leq \frac{c \hat{p}}{n}.
    \end{align*}
    This further gives that 
    \begin{align*}
        \hat{p} \leq \tau + \frac{c}{2n} + \sqrt{\frac{c}{n}\tau + \frac{c^2}{4n^2}}.
    \end{align*}
    Combining the above with the second inequality in \eqref{eq:lm_anci_bh_2} to have 
    \begin{align*}
        \frac{\Delta^2}{4c} n \leq \hat{p} \leq  \tau + \frac{c}{2n} + \sqrt{\frac{c}{n}\tau + \frac{c^2}{4n^2}},
    \end{align*}
    which can be rearranged as 
    \begin{align*}
        \frac{\Delta^2}{4c} n - \tau - \frac{c}{2n} \leq \sqrt{\frac{c}{n}\tau + \frac{c^2}{4n^2}}.
    \end{align*}
    Taking square of both sides and cancel the repeated terms to have 
    \begin{align*}
        \left( \frac{\Delta^2}{4c} n \right)^2 - \frac{\Delta^2\tau }{2c}n + \tau ^2 - \frac{\Delta^2}{4} \leq 0,
    \end{align*}
    which is equivalent to 
    \begin{align*}
        \left( \frac{\Delta^2}{4c} n - \tau \right)^2 \leq \frac{\Delta^2}{4}.
    \end{align*}
    Taking square root of both sides and we completed the proof. 
\end{proof}

\begin{lemma}\label{lm:instance_lb}
Given the fMC p-values $\{P^{\text{fMC}}_i\} = \{p_i\}$ with BH threshold $\tau^*$, 
$\exists \delta_0 \in (0, 0.5)$, $c_0>0$, $c_1>0$, s.t. $\forall \delta<\delta_0$, a $\delta$-correct algorithm satisfies
\begin{align*}
    \E\left[N \Big\vert \{P_i^{\text{fMC}}\} = \{p_i\}\right] \geq  c_0 n \sum_{i=1}^m  \ind\{\tau^* < p_i \leq \tau^*+\frac{c_1}{\sqrt{n}}\}.
\end{align*}
\end{lemma}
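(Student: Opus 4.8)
The plan is to prove the lemma by reducing it to a family of two-point hypothesis-testing problems, one for each hypothesis $i$ whose fMC p-value lies in the window $(\tau^*,\tau^*+c_1/\sqrt n]$, and to lower bound the samples spent on each such arm via a change-of-measure argument. Fix such an $i$ and write $\nu$ for the given instance, in which arm $i$ has population $\{b_{i,j}\}_{j\in[n]}$ with $k_i$ ones, so $p_i=(1+k_i)/(n+1)>\tau^*$ and hence $i\notin\mathcal{R}^{\mathrm{fMC}}$. I would build an alternative instance $\nu'$ identical to $\nu$ except that arm $i$'s population is modified to have $k_i'=k_i-D_i$ ones, with $D_i=\lceil(p_i-\tau^*)(n+1)\rceil=\Theta(\Delta_i n+1)$ chosen so the new fMC p-value satisfies $p_i'\le\tau^*$. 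Decreasing a single p-value below the threshold can only raise the BH critical rank: the number of p-values below $\tau^*$ increases from $r^*$ to $r^*+1$, and since $p_i'\le\tau^*=r^*\alpha/m<(r^*+1)\alpha/m$, the criterion \eqref{eq:fMC_critical_rank} gives a new rank $r^{*\prime}\ge r^*+1$ and threshold $\tau^{*\prime}\ge\tau^*\ge p_i'$, so $i\in\mathcal{R}^{\mathrm{fMC}}$ under $\nu'$. Thus the correct decision for arm $i$ is flipped between the two instances while every other hypothesis is untouched.

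Let $A$ be the event that the algorithm's output contains $i$. Because the algorithm is $\delta$-correct conditional on the fMC p-values, under $\nu$ the correct output excludes $i$ so $\P_\nu(A)\le\delta$, whereas under $\nu'$ it includes $i$ so $\P_{\nu'}(A)\ge1-\delta$. Since $\nu$ and $\nu'$ differ only on arm $i$, the transportation (change-of-measure) inequality for sequential sampling gives
\begin{align*}
    \E_\nu[N_i]\cdot \sup_{\mathrm{reach}} \mathrm{KL}_{\mathrm{step}} \;\ge\; \mathrm{KL}\!\left(\mathcal{L}_\nu\,\|\,\mathcal{L}_{\nu'}\right) \;\ge\; d\!\left(\P_\nu(A),\P_{\nu'}(A)\right)\;\ge\; \log\tfrac{1}{2\delta},
\end{align*}
where $\mathcal{L}$ is the law of all observations up to termination and $\mathrm{KL}_{\mathrm{step}}$ is the conditional divergence of a single without-replacement draw from arm $i$. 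Each such draw is Bernoulli with success probability equal to the fraction of remaining ones, which under the two instances differ by $\Theta(D_i/n)=\Theta(\Delta_i)$ while staying near $\tau^*$; hence the single-draw bound $\mathrm{KL}(\mathrm{Ber}(a)\|\mathrm{Ber}(b))\le (a-b)^2/(b(1-b))$ yields $\mathrm{KL}_{\mathrm{step}}=O(\Delta_i^2/\tau^*)$. Rearranging gives the per-arm bound $\E_\nu[N_i]=\Omega(\tau^*/\Delta_i^2)$, and since $\Delta_i\le c_1/\sqrt n$ for window arms this is $\Omega(\tau^* n/c_1^2)$, which after capping at $n$ is at least $c_0 n$ for a constant $c_0$ that absorbs the factor $\tau^*$.

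Summing these per-arm bounds over all $i$ in the window — each obtained from its own alternative $\nu'_i$ but all evaluated under the single true instance $\nu$ — and using $N=\sum_i N_i$ would give the claimed inequality. The main obstacle I anticipate is the without-replacement bookkeeping in the transportation step: the per-draw proportions drift as the finite populations deplete, so the uniform bound $\mathrm{KL}_{\mathrm{step}}=O(\Delta_i^2/\tau^*)$ is only valid while a constant fraction of each population remains. I would resolve this with a dichotomy: if $\E_\nu[N_i]\ge c_0 n$ the conclusion holds outright; otherwise Markov's inequality shows arm $i$ is sampled at most $n/2$ times with high probability, so the populations stay at least half full, the per-draw divergence is controlled throughout, and the transportation inequality then forces $\E_\nu[N_i]=\Omega(\tau^*/\Delta_i^2)\ge c_0 n$, contradicting the assumed smallness and closing the case. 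A secondary point needing care is confirming that the perturbation of arm $i$ genuinely moves the global BH decision in the intended direction, which is exactly where the monotonicity of the critical rank in \eqref{eq:fMC_critical_rank} is invoked.
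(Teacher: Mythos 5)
Your high-level reduction is the same as the paper's: for each hypothesis in the window, build a two-point alternative in which that arm's p-value is pulled down to $\tau^*$, note (correctly, and more explicitly than the paper) that BH monotonicity flips the decision for that arm only, and then run a change-of-measure argument arm by arm, closing with a Markov-type dichotomy on $N_i$. The gap is in the change-of-measure step itself. The transportation inequality $\E_\nu[N_i]\cdot\sup \mathrm{KL}_{\mathrm{step}} \geq \mathrm{KL}(\mathcal{L}_\nu\Vert\mathcal{L}_{\nu'}) \geq d(\P_\nu(A),\P_{\nu'}(A))$ is a statement about i.i.d. arm distributions, and here the observations are drawn \emph{without replacement} from a finite binary population. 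Worse than mere drift of the per-draw proportions: the two laws are not mutually absolutely continuous. Under $\nu$ the arm's population has $k_i$ ones while under $\nu'$ it has $k_i-D_i$; the event of observing more than $k_i-D_i$ ones has positive probability under $\nu$ and probability zero under $\nu'$, so $\mathrm{KL}(\mathcal{L}_\nu\Vert\mathcal{L}_{\nu'})=+\infty$ and your $\sup_{\mathrm{reach}}\mathrm{KL}_{\mathrm{step}}$ is also $+\infty$ (states in which $\nu'$'s ones are exhausted are reachable under $\nu$). Your dichotomy does not repair this: ``arm $i$ is sampled at most $n/2$ times'' keeps the \emph{number of remaining draws} large, but it does not keep the \emph{number of remaining ones} comparable under the two instances, which is what the per-step divergence actually depends on; and once you condition on such an event you are no longer bounding a KL between the two laws, so the chain rule / data-processing step you invoke no longer applies as stated.

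What is missing is the paper's replacement for expected-KL: a \emph{pathwise} bound on the hypergeometric likelihood ratio, valid on a good event that controls the sample paths and not just the sample counts. The paper takes $\mathcal{S}_l = \mathcal{A}_l\cap\mathcal{B}_l\cap\mathcal{C}_l$, where $\mathcal{A}_l=\{N_l\leq 0.5n\}$ (your Markov step), $\mathcal{B}_l$ is correctness under $H_0$, and, crucially, $\mathcal{C}_l=\bigl\{\max_{1\leq t\leq 0.5n}\vert S_{l,t}-tk_0/n\vert< 2\sqrt{n}\bigr\}$, which is controlled by Serfling's maximal inequality for sampling without replacement (Lemma \ref{lm:maximal}). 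On $\mathcal{C}_l$ the partial sums of observed ones track their means, so both instances retain order $\tau^* n$ ones throughout, and the explicit ratio of hypergeometric likelihoods $\prod_r\bigl(1-\frac{\Delta_k}{k_0-r}\bigr)\prod_r\bigl(1+\frac{\Delta_k}{n-k_0-r}\bigr)$ can be lower bounded by a constant $c_8>0$ pathwise. The conclusion then comes from $\P_l(\mathrm{error})\geq\E_0\bigl[\ind\{\mathcal{S}_l\}\,L_l(W)/L_0(W)\bigr]\geq c_8\P_0(\mathcal{S}_l)\geq c_8/8$, which never requires any expected divergence to be finite. So: right skeleton, but the KL-in-expectation machinery you lean on fails for finite-population sampling, and the fix is the Serfling event plus the likelihood-ratio-on-a-good-event argument.
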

\begin{proof} (Proof of Lemma \ref{lm:instance_lb})
Consider any $\delta$-correct algorithm and let us denote the true (unknown) fMC p-values by $\{q_i\}$. 
For any null hypothesis $l$ with fMC p-value $\tau^*<p_l\leq \tau^* + \frac{c_1}{\sqrt{n}}$, consider the following settings:
\begin{align}
    & H_0: q_i = p_i,~~~~\text{for}~i \in [m], \\
    & H_l: q_l = \tau^*,~~~~q_i=p_i,~\text{for}~i \neq l.
\end{align}
The $\delta$-correct algorithm should accept the $l$th null hypothesis under $H_0$ and reject it under $H_l$, both with probability at least $1-\delta$. 
For $x \in \{0, l\}$, we use $\E_x$ and $\P_x$ to denote the expectation and probability, respectively, conditional on the fMC p-values $\{P^{\text{fMC}}_i\} = \{q_i\}$, under the algorithm being considered and under setting $H_x$. Let $N_l$ be the total number of MC samples computed for null hypothesis $l$. 
In order to show Lemma \ref{lm:instance_lb}, it suffices to show that $\E_0[N_l]\geq c_0 n$. 
We prove by contradiction that if $\E_0[N_l] < c_0 n$ and if the algorithm is correct under $H_0$ with probability at least $0.5$, the probability that it makes a mistake under $H_l$ is bounded away from 0.

{\bf Notations.}
Let $S_{l,t}$ to be the number of ones when $t$ MC samples are collected for the $l$th null hypothesis.
We also let $S_l$ be the number of ones when all $N_l$ MC samples are collected. 
Let $k_0=(n+1)p_l-1$ and $k_l=(n+1)\tau^*-1$. 
Given $N_l$, $S_l$ follows hypergeometric distribution with parameters $(N_l, k_0, n)$ and $(N_l, k_l, n)$ under $H_0$ and $H_l$, respectively. 
Let $\Delta_k = k_0 - k_l$. We note that  
\begin{align}
    \Delta_k = (n+1)(p_l-\tau^*) \in (0, \frac{c_1 (n+1)}{\sqrt{n}}].
\end{align}

{\bf Define key events.}
Let $c_0=1/8$ and define the event 
\begin{align}
    \mathcal{A}_l = \{N_l \leq 0.5n\}. 
\end{align}
Then by Markov's inequality, $\P_0(\mathcal{A}_l) \geq \frac{3}{4}$.

Let $\mathcal{B}_l$ be the event that the $l$th null hypothesis is accepted. Then $\P_0(\mathcal{B}_l) \geq 1-\delta > 1/2$. 

Let $\mathcal{C}_l$ be the event defined by 
\begin{align}
    \mathcal{C}_l = \left\{ \max_{1\leq t \leq 0.5 n} \vert S_{l,t} - t k_0/n \vert < 2 \sqrt{n}\right\}.
\end{align}
By Lemma \ref{lm:maximal} $\P_0(\mathcal{C}_l) \geq 7/8$.

Finally, define the event $\mathcal{S}_l$ by $\mathcal{S}_l = \mathcal{A}_l \cap \mathcal{B}_l \cap \mathcal{C}_l$. Then $\P_0(\mathcal{S}_l) > 1/8$.

{\bf Lower bound the likelihood ratio. } We let $W$ be the history of the process (the sequence of null hypotheses chosen to sample at each round, and the sequence of observed MC samples) until the algorithm terminates.
We define the likelihood function $L_l$ by letting
\begin{align}
    L_l(w) = \P_l(W=w),
\end{align}
for every possible history $w$. Note that this function can be used to define a random variable $L_l(W)$.

Given the history up to round $t-1$, the null hypotheses to sample at round $t$ has the same probability distribution under either setting $H_0$ and $H_l$; similarly, the MC sample at round $t$ has the same probability setting, under either hypothesis, except for the $l$th null hypothesis. For this reason, the likelihood ratio
\begin{equation}\label{eq:instance_lb_pd_1}
\begin{split}
    & \frac{L_l(W)}{L_0(W)} = \frac{\binom{k_l}{S_l} \binom{n-k_l}{N_l-S_l}}{\binom{k_0}{S_l} \binom{n-k_0}{N_l-S_l}} \\
    & = \prod_{r=0}^{S_l-1} \frac{k_l -r}{k_0 -r} \prod_{r=0}^{N_l-S_l-1} \frac{n-k_l-r}{n-k_0-r} \\
    & = \prod_{r=0}^{S_l-1} \left( 1 - \frac{\Delta_k}{k_0 -r}\right) \prod_{r=0}^{N_l-S_l-1} \left( 1+ \frac{\Delta_k}{n-k_0-r}\right) \\
\end{split}
\end{equation}
Next we show that on the event $\mathcal{S}_l$, the likelihood ratio is bounded away from 0. 

If $S_l \leq 100\sqrt{n}$, then the likelihood ratio 
\begin{align}
    & \frac{L_l(W)}{L_0(W)} \geq \left( 1 - \frac{\Delta_k}{k_0 - S_l}\right)^{S_l} \geq \left( 1 - \frac{c_2}{\sqrt{n}} \right)^{100\sqrt{n}} > c_3,
\end{align}
for some constants $c_2>0$, $c_3>0$.

If $S_l > 100\sqrt{n}$, further write \eqref{eq:instance_lb_pd_1} as 
\begin{equation}\label{eq:instance_lb_pd_2}
\begin{split}
    & \frac{L_l(W)}{L_0(W)} 
    = \prod_{r=0}^{S_l-1} \left\{\left[ 1 - \left(\frac{\Delta_k}{k_0 -r}\right)^2\right] \left( 1 + \frac{\Delta_k}{k_0 -r}\right)^{-1}\right\} \\
    & \prod_{r=0}^{N_l-S_l-1} \left( 1+ \frac{\Delta_k}{n-k_0-r}\right).
\end{split}
\end{equation}

Since $S_l > 100\sqrt{n}$, on $\mathcal{C}_l$, $\frac{N_l-S_l}{S_l} > 1$. 
Note that if $a\geq 1$, then the mapping $x \mapsto (1+x)^a$ is convex for $x>-1$. Thus, $(1+x)^a \geq 1+ax$, which implies that for any $0 \leq r \leq k_0$,
\begin{align}\label{eq:instance_lb_pd_3}
    \left( 1 + \frac{\Delta_k}{\frac{N_l-S_l}{S_l}(k_0-r)}\right)^{\frac{N_l-S_l}{S_l}} \overset{\mathcal{C}_l}{\geq} \left( 1 + \frac{\Delta_k}{k_0 -r}\right).
\end{align}
Then, \eqref{eq:instance_lb_pd_2} can be further written as 
\begin{equation}\label{eq:instance_lb_pd_4}
\begin{split}
    & \frac{L_l(W)}{L_0(W)} \overset{\eqref{eq:instance_lb_pd_3}}{\geq} \prod_{r=0}^{S_l-1} \left[ 1 - \left(\frac{\Delta_k}{k_0 -r}\right)^2\right] \\
    &  \prod_{r=0}^{S_l-1} \left( 1 + \frac{\Delta_k}{\frac{N_l-S_l}{S_l}(k_0-r)}\right)^{ - \frac{N_l-S_l}{S_l}}\\
    &  \prod_{r=0}^{N_l-S_l-1} \left( 1+ \frac{\Delta_k}{n-k_0-r}\right).
\end{split}
\end{equation}
Note that the 2nd term is no less than 
\begin{align}
    \prod_{r=0}^{N_l-S_l-1} \left( 1 + \frac{\Delta_k}{\frac{N_l-S_l}{S_l}k_0-r}\right)^{ - 1}.
\end{align}
Eq. \eqref{eq:instance_lb_pd_4} can be further written as 
\begin{equation}\label{eq:instance_lb_pd_5}
\begin{split}
    & \frac{L_l(W)}{L_0(W)} \geq \prod_{r=0}^{S_l-1} \left[ 1 - \left(\frac{\Delta_k}{k_0 -r}\right)^2\right]\\
    &  \prod_{r=0}^{N_l-S_l-1} \left [ \left( 1 + \frac{\Delta_k}{\frac{N_l-S_l}{S_l}k_0-r}\right)^{ - 1} \left( 1+ \frac{\Delta_k}{n-k_0-r}\right) \right]
\end{split}
\end{equation}
Next we show that both terms in \eqref{eq:instance_lb_pd_5} are bounded away from 0.

{\bf First term in \eqref{eq:instance_lb_pd_5}}
\begin{align}
    & \prod_{r=0}^{S_l-1} \left[ 1 - \left(\frac{\Delta_k}{k_0 -r}\right)^2\right] \geq \left[ 1 - \left(\frac{\Delta_k}{k_0 -S_l}\right)^2\right]^{S_l}\\
    & \overset{\mathcal{A}_l,\mathcal{C}_l}{\geq } \left( 1 - \frac{c_4}{n}\right)^{n} \geq c_5 > 0,
\end{align}
for some constants $c_4>0$, $c_5>0$.

{\bf Second term in \eqref{eq:instance_lb_pd_3}}
\begin{equation}
    \begin{split}
    & \prod_{r=0}^{N_l-S_l-1} \left [ \left( 1 + \frac{\Delta_k}{\frac{N_l-S_l}{S_l}k_0-r}\right)^{ - 1} \left( 1+ \frac{\Delta_k}{n-k_0-r}\right) \right] \\
    & = \prod_{r=0}^{N_l-S_l-1} \left( 1 + \frac{\frac{\Delta_k}{n-k_0 - r} - \frac{\Delta_k}{\frac{N_l-S_l}{S_l}k_0-r}}{1 + \frac{\Delta_k}{\frac{N_l-S_l}{S_l}k_0-r}} \right)\\
    & = \prod_{r=0}^{N_l-S_l-1} \left( 1 + \frac{ \Delta_k  \frac{N_l}{S_l}\left( k_0 - \frac{S_l}{N_l} n \right)}{\left(1 + \frac{\Delta_k}{\frac{N_l-S_l}{S_l}k_0-r}\right) (n-k_0 -r) (\frac{N_l-S_l}{S_l}k_0-r)} \right) \\
    &  \overset{\mathcal{A}_l,\mathcal{C}_l, S_l > 100\sqrt{n}}{\geq} \left( 1 - \frac{c_6}{N_l \sqrt{n}} \right)^{N_l} \geq c_7,
    \end{split}
\end{equation}
for some constants $c_4>0$, $c_5>0$.

Hence $\exists c_8>0$, such that on $\mathcal{S}_l$ the likelihood ratio 
\begin{align}
    \frac{L_l(W)}{L_0(W)} \geq c_8 > 0. 
\end{align}

Therefore, the probability of making an error under $H_l$
\begin{equation}
    \begin{split}
        & \P_l(\text{error}) \geq \P_l(\mathcal{S}_l) = \E_l[\ind\{S_l\}] \\
    & = \E_0\left[\ind\{S_l\} \frac{L_l(W)}{L_0(W)} \right] \geq c_8 \P_0(\mathcal{S}_l) \geq \frac{c_8}{8}.
    \end{split}
\end{equation}
Hence, there does not exist a $\delta$-correct algorithm for any $\delta \leq \frac{c_8}{8}$, completing the proof. 
\end{proof}

\begin{lemma} \label{lm:maximal}
Let $X_1, \cdots, X_n$ be random variables sampled without replacement from the set $\{x_1, \cdots, x_N\}$, where $n \leq N$ and $x_i \in \{0,1\}$. Let $\mu = \frac{1}{N}\sum_{i=1}^N x_i$ and for $k \in [N]$, let $S_k = \sum_{i=1}^k X_i$. Then for any $\theta>0$,
\begin{align}
    \P\left(\max_{1\leq k \leq n} \vert S_k - \mu k \vert \geq \sqrt{n \theta} \right)  \leq \frac{1}{\theta}.
\end{align}
\end{lemma}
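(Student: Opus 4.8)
The plan is to prove this maximal inequality via a martingale transform followed by a \emph{weighted} (Hájek--Rényi--Chow) maximal inequality, the point being that the natural partial-sum process $M_k := S_k - \mu k$ is \emph{not} itself a martingale under sampling without replacement, so a plain Doob argument will not suffice.

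First I would set up the filtration $\mathcal{F}_k = \sigma(X_1,\dots,X_k)$ and compute the one-step conditional mean. Since the $N-k$ unsampled values have total $N\mu - S_k$ and $X_{k+1}$ is uniform among them,
\[
\E[X_{k+1}\mid\mathcal{F}_k] = \frac{N\mu - S_k}{N-k},
\qquad\text{hence}\qquad
\E[M_{k+1}\mid\mathcal{F}_k] = \frac{N-k-1}{N-k}\,M_k .
\]
Thus $M_k$ shrinks in conditional mean rather than being a martingale. Dividing out the telescoping product $\prod_{j<k}\frac{N-j-1}{N-j} = \frac{N-k}{N}$ repairs this: the rescaled process $\tilde{M}_k := \frac{N}{N-k}\,M_k = \frac{N(S_k-\mu k)}{N-k}$ is a genuine martingale with $\tilde M_0 = 0$, and $M_k = w_k\,\tilde M_k$ with the deterministic, \emph{non-increasing} weights $w_k = (N-k)/N$.

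Second, because the weights are non-increasing, I would invoke the Hájek--Rényi--Chow maximal inequality for martingales (or reprove it directly via Abel summation combined with Doob's $L^2$ inequality): writing $D_k = \tilde M_k - \tilde M_{k-1}$,
\[
\P\!\left(\max_{1\le k\le n}|M_k|\ge\sqrt{n\theta}\right)
= \P\!\left(\max_{1\le k\le n} w_k|\tilde M_k|\ge\sqrt{n\theta}\right)
\le \frac{1}{n\theta}\sum_{k=1}^n w_k^2\,\E[D_k^2].
\]
It then remains to bound the weighted sum of increment variances. Using orthogonality of martingale increments together with the exact finite-population variance $\E[M_k^2] = \mathrm{Var}(S_k) = k\sigma^2\frac{N-k}{N-1}$, where $\sigma^2 = \mu(1-\mu)\le\tfrac14$, one gets $\E[\tilde M_k^2] = \frac{N^2 k\sigma^2}{(N-k)(N-1)}$, and a short telescoping computation yields $\E[D_k^2] = \frac{N^3\sigma^2}{(N-1)(N-k)(N-k+1)}$. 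Substituting and summing,
\[
\sum_{k=1}^n w_k^2\,\E[D_k^2]
= \frac{N\sigma^2}{N-1}\sum_{k=1}^n\frac{N-k}{N-k+1}
\le \frac{N\sigma^2}{N-1}\,n \le \frac{n}{2},
\]
so the probability is at most $\frac{1}{2\theta}\le\frac1\theta$, as claimed.

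The main obstacle is precisely the first realization: plain Doob applied to the martingale $\tilde M_k$ is \emph{not} enough, because the crude bound $|M_k|\le|\tilde M_k|$ is catastrophically lossy when $n$ approaches $N$ (there $\E[\tilde M_n^2]$ blows up like $(N-n)^{-1}$ while $\E[M_n^2]$ stays $O(n)$). Exploiting the non-increasing weights $w_k$ through the weighted maximal inequality is what keeps the bound tight uniformly over the full range $n\le N$; the remaining work is the routine but slightly delicate finite-population variance bookkeeping and the telescoping evaluation of $\E[D_k^2]$.
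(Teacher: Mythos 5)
Your proof is correct, but it takes a genuinely different route from the paper: the paper does not prove this lemma at all, it simply invokes Corollary 1.2 of Serfling (1974), a Chebyshev-type maximal inequality for sums sampled without replacement. What you have done is reconstruct, in the binary special case, the martingale machinery underlying such Serfling-type results. Your key steps all check out: the conditional-mean computation $\E[M_{k+1}\mid\mathcal{F}_k]=\frac{N-k-1}{N-k}M_k$, the rescaled martingale $\tilde M_k=\frac{N}{N-k}M_k$ with non-increasing weights $w_k=(N-k)/N$, the Hájek--Rényi--Chow bound, the hypergeometric variance $\E[M_k^2]=k\sigma^2\frac{N-k}{N-1}$, and the telescoping identity $\E[D_k^2]=\frac{N^3\sigma^2}{(N-1)(N-k)(N-k+1)}$ (the numerator of the relevant difference indeed collapses to $N$). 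Your final constant $\frac{1}{2\theta}$ is even a factor of two stronger than the lemma claims, using $\sigma^2=\mu(1-\mu)\le\frac14$ and $\frac{N}{N-1}\cdot\frac14\le\frac12$ for $N\ge 2$ (the case $N=1$ is trivial since then $M_k\equiv 0$); and your diagnosis of why plain Doob on $\tilde M_k$ degrades as $n\to N$ is accurate and is precisely the reason the weighted inequality is needed. The trade-off: the paper's citation is one line but forces the reader to chase the reference and verify that Serfling's statement specializes to the claimed form, whereas your argument is self-contained, elementary, and exposes explicit constants. Two small patches you should make if writing this out in full: (i) when $n=N$ the quantity $\tilde M_N$ is undefined (and $w_N=0$), so first discard the index $k=N$, which is harmless because $M_N=S_N-\mu N=0$ can never reach the threshold $\sqrt{n\theta}>0$; (ii) state explicitly the version of the Chow inequality you invoke (nonnegative submartingale $\tilde M_k^2$, non-increasing weights $w_k^2$, threshold $n\theta$), since it is the only non-elementary ingredient in the argument.
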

This is a direct consequence of Corollary 1.2 in the paper \cite{serfling1974probability}.

\end{document}